\newtheorem*{remark}{Remark}
\newcommand{\xmark}{\ding{55}}%
\newcommand{\cmark}{\ding{51}}%
\definecolor{babypink}{rgb}{0.96, 0.76, 0.76}
  \providecommand\BibTeX{{%
    \normalfont B\kern-0.5em{\scshape i\kern-0.25em b}\kern-0.8em\TeX}}}
\newtheorem{proposition}{Proposition}[section]
\begin{document}

\title[A Framework for Privacy and Utility in Continuous LBS Interactions]{A Framework for Managing Multifaceted Privacy Leakage While Optimizing Utility in Continuous LBS Interactions}

\author{Anis Bkakria}
\email{anis.bkakria@irt-systemx.fr}
\orcid{0000-0002-9758-4617}
\affiliation{%
  \institution{IRT SystemX}
  \country{France}
  \postcode{43017-6221}
}
\author{Reda Yaich}
\email{reda.yaich@irt-systemx.fr}
\orcid{?}
\affiliation{%
  \institution{IRT SystemX}
  \country{France}
  \postcode{43017-6221}
}


\begin{abstract}
Privacy in Location-Based Services (LBS) has become a paramount concern with the ubiquity of mobile devices and the increasing integration of location data into various applications. This paper presents several novel contributions to advancing the understanding and management of privacy leakage in LBS. Our contributions provide a more comprehensive framework for analyzing privacy concerns across different facets of location-based interactions. Specifically, we introduce $(\epsilon, \delta)$-location privacy, $(\epsilon, \delta, \theta)$-trajectory privacy, and $(\epsilon, \delta, \theta)$-POI privacy, which offer refined mechanisms for quantifying privacy risks associated with location, trajectory, and points of interest (POI) when continuously interacting with LBS. Furthermore, we establish fundamental connections between these privacy notions, facilitating a holistic approach to privacy preservation in LBS.
Additionally, we present a lower bound analysis to evaluate the utility of the proposed privacy-preserving mechanisms, offering insights into the trade-offs between privacy protection and data utility. Finally, we instantiate our framework with the Plannar Isotopic Mechanism to demonstrate its practical applicability while ensuring optimal utility and quantifying privacy leakages across various dimensions. The evaluations provided provide a comprehensive insight into the efficacy of our framework in capturing privacy loss on location, trajectory, and points of interest while enabling quantification of the ensured accuracy.
\end{abstract}



\keywords{Location \& Navigation Privacy, Local Differential Privacy, Privacy Metric}

\maketitle

\section{Introduction}
In today's interconnected world, location navigation services have become an integral part of our daily lives. With the widespread adoption of smartphones and the increasing availability of Global Positioning System (GPS) technology, users rely heavily on LBS to navigate their surroundings, find points of interest, and plan their travel routes. These services have transformed how we explore new places, discover local businesses, and efficiently reach our desired destinations. However, as location navigation services have skyrocketed, so have concerns about the privacy and security of users' personal information.

The core functionality of location navigation services requires users to disclose their precise location information to service providers. This data is essential for accurate positioning, route calculations, and providing relevant location-based recommendations. However, collecting and storing such sensitive location data raises significant privacy implications. Users may have legitimate concerns about their whereabouts being monitored, their movements being tracked, and their personal information being misused. Additionally, the aggregation and analysis of location data by service providers can lead to user profiling, targeted advertising, and potential breaches of individual privacy.

This research paper delves into the realm of continuous LBS, which involves users regularly or consistently sharing their current positions to access various services. The paper specifically addresses the security concerns arising from adversaries who have access to a series of multiple or sequential queries collected over time, leading to a robust spatiotemporal correlation. This correlation allows attackers to potentially track and monitor a particular user, enabling them to reconstruct the user's movement trajectory and deduce the specific points of interest visited by the user between two consecutive released locations.

\subsection{Related Work}
\label{sec:related_work}
A multitude of prior studies have extensively explored various techniques and methods aimed at protecting the privacy of LBS users' location information. Based on their distinguishing principles, the proposed approaches can be classified into two main categories: Obfuscation-based mechanisms (e.g., \cite{fang2014differential,kellaris2014differentially}) and cryptography-based mechanisms (e.g., \cite{shin2012privacy,stirbys2017privacy}). We primarily focus on obfuscation-based mechanisms as our approach falls within this category of privacy protection mechanisms. A full review and comparison between obfuscation-based mechanisms and cryptographic approaches is provided in \cite{jiang2021location}. 

Obfuscation-based mechanisms encompass techniques such as path obfuscation, cloaking, mix zones, and differential privacy.

Path Confusion mechanisms \cite{hoh2010achieving,eckhoff2011slotswap} leverage the spatiotemporal correlation often found in consecutive location samples. By employing a multitarget tracking algorithm \cite{gruteser2005anonymity}, which receives these location samples submitted under anonymized pseudonyms, Path Confusion ensures that it becomes challenging to reconstruct an individual's specific trajectory from the data. Unfortunately, Path Confusion mechanisms often adopt a limited adversary model, relying on a trusted proxy to introduce delays in users' queries. Consequently, there is a potential scenario where a user's location data may not be transmitted to the LBS server for an extended time. As a result, users may experience a deterioration in their service quality due to being "out-of-service" for the entirety of this period.

Cloaking methods \cite{jiang2018roblop} can also find applications in ongoing query situations aimed at protecting user trajectory privacy. This approach's fundamental principle continues to revolve around the utilization of k-anonymity \cite{sweeney2002k}, thereby preventing attackers from discerning the specific trajectory of the target user within a set of k traces. Like Path Confusion mechanisms, the Cloaking technique frequently depends on a trusted third-party anonymizer, leading to a somewhat constrained adversary model.

Mix Zones concept \cite{palanisamy2014anonymizing,palanisamy2014attack} is defined as spatial regions within the context of LBS where applications are unable to access any location information of the users present in that area. when users enter a mix zone, they adopt a new and previously unused pseudonym, enhancing their anonymity and privacy. Mix zones, while offering a level of privacy, have limitations. They may not be suitable for LBS, which requires consistent user identities, as users within mix zones are required to change their pseudonyms frequently. Furthermore, mix zones cannot support LBS when users are within the zone, as their location data cannot be reported to the LBS server during that time. 

A common limitation of Path Confusion, Cloaking, and Mix Zones-based approaches is that they are susceptible to side channel-based inferential attacks due to the potential leakage of information from various sources. In addition, they do not provide a metric to quantify the provided trade-off between privacy and utility formally. Moreover, they are unable to offer personalized privacy and utility trade-offs tailored to the specific preferences of individual users as. When privacy mechanisms use uniform parameters for all users, they fail to account for different users' distinct privacy requirements and preferences. Some users may be comfortable sharing more personal information in exchange for specific benefits, while others may prioritize maximum privacy. Without customization, these diverse preferences cannot be accommodated.

\begin{table*}[t]
\centering
\caption{Comparison of Privacy Protection Approaches for LBS}
\begin{tabular}{|l|l|l|c|c|c|c|}
\hline
\textbf{Approach} & \textbf{Mechanism Type} & \textbf{Adversary} & \multicolumn{3}{c|}{\textbf{Quantifiable Privacy}} & \textbf{Optimality} \\ \cline{4-6}
                  &                       &        \textbf{Model}                 & \textbf{Location} & \textbf{Trajectory} & \textbf{POI} & \\ \hline
Path Confusion \cite{hoh2010achieving,eckhoff2011slotswap} & Obfuscation-based & Weak & \xmark & \xmark & \xmark & - \\ \hline
Cloaking \cite{jiang2018roblop} & Obfuscation-based & Weak & \cmark & \xmark & \xmark & - \\ \hline
Mix Zones \cite{palanisamy2014anonymizing} & Obfuscation-based & Medium & \xmark & \xmark & \xmark & - \\ \hline
Geo-Indistinguishability \cite{andres2013geo} & LDP & Strong & \cmark & \xmark & \xmark & - \\ \hline
$\omega$-event DP \cite{kellaris2014differentially} & LDP & Strong & \cmark & \xmark & \xmark & - \\ \hline
$\delta$-location DP \cite{xiao2015protecting} & Adversarial LDP & Strong  & \cmark & \xmark & \xmark & \cmark \\ \hline
RDP \cite{meehan2021location} & Divergence-based Privacy & Strong & \cmark & \xmark & \xmark & - \\ \hline
This work & Adversarial LDP & Strong & \cmark & \cmark & \cmark & \cmark \\ \hline
\end{tabular}
\label{tab:comLPPM}
\end{table*}

Differential privacy-based techniques help mitigate some of the aforementioned limitations. The first effort to apply the classic differential privacy definition to continuous LBS was introduced in \cite{dwork2010differential}. The proposed method, known as "Event-level DP," seeks to protect a specific location of the user throughout a continuous flow of events (locations). Nevertheless, as demonstrated in \cite{kellaris2014differentially}, event-level DP lacks robustness when the user interacts several time from the same location with the LBS sequentially over time. The authors of \cite{kellaris2014differentially} propose \textit{$\omega$-event DP} which establishes the adjacent stream prefix to ensure the protection of updates within an \textit{$\omega$-event}. Andres et al. \cite{andres2013geo} introduce a privacy concept that expands upon differential privacy, introducing the notion of geo-indistinguishability to safeguard the precise locations of individuals when interacting with continuous LBS. Geo-indistinguishability can serve as a metric for quantifying the loss of privacy in terms of location when engaging with a LBS in a sequential manner over time. Furthermore, geo-indistinguishability provides users with the ability to delineate the trade-off between privacy and the utility captured by specifying the geographical area in which the actual user location should remain indistinguishable. Nonetheless, as demonstrated in \cite{xiao2015protecting}, the geo-indistinguishability model is susceptible to temporal correlation. This means that an adversary, by considering the release time of two consecutive (perturbed) locations in conjunction with road network constraints or the user's movement pattern, can still accurately infer the actual user's location. To overcome the previous limitation, in \cite{xiao2015protecting} authors extends the standard differential privacy to the context of continuous location sharing, with the goal of protecting the true user' position at each timestep. This proposition involves the redefinition of the concept of neighboring datasets within standard differential privacy, ultimately leading to the definition of a "$\delta$-location" as an "obfuscation locations set" of actual position samples at each timestep. Nevertheless, as highlighted by the authors, the approach in \cite{xiao2015protecting} cannot be used to quantify the privacy loss of the user trajectory.

In \cite{meehan2021location}, Meehan and Chaudhuri propose a framework that introduces a Rényi divergence-based privacy approach designed to address the privacy challenges of revealing multiple locations in location-based services. Specifically, it allows for bounding the inference of a sensitive location by incorporating the influence of previously revealed locations. By leveraging Gaussian process conditional priors, the method accounts for the inherent dependence between points in a user's location trace, ensuring that privacy loss is bounded for individual locations within a fixed radius. However, while the framework effectively mitigates the inference of specific locations, it is limited in its ability to bound or measure the inference of entire trajectories or visited points of interest (POIs).

In light of the aforementioned, several challenges need to be addressed concerning user privacy when engaging in continuous interactions with LBS. First, the development of Location Privacy-Preserving Mechanisms (LPPMs) relies on the ability to rigorously evaluate location privacy and the degradation of service quality. However, current research predominantly focuses on devising algorithms to protect privacy, with only a limited number of studies attempting to assess the efficacy of these protective measures in relation to both privacy and utility \cite{primault2018long}. Second, it is important to note that the current research and industry landscape also lacks a unified framework for the simultaneous quantification of privacy losses across locations, trajectories, and points of interest. While some efforts have been made to measure privacy loss in either of these aspects individually \cite{andres2013geo,xiao2015protecting,meehan2021location}, the absence of a holistic framework impedes a comprehensive understanding of the multifaceted nature of location-based privacy challenges. Developing a quantification framework that can evaluate privacy losses throughout a user's locations, trajectory, and in proximity to specific points of interest in a unified manner is a paramount research gap. Third, establishing a framework for the implementation of personalized privacy protection is a complex but essential endeavor, particularly for ensuring optimal user benefits from location data. In practical terms, users demonstrate distinct and evolving requirements with respect to location privacy and service utility. Furthermore, within the same user context, these requirements exhibit variability contingent upon factors such as service categories, geographical context, temporal considerations, and environmental conditions. An efficacious solution entails the development of a personalized LPPM that is intrinsically attuned to the user's distinctive preferences, encompassing their utility expectations, privacy requisites, and specific service typologies. 

Table \ref{tab:comLPPM} provides a comparative analysis of various privacy protection approaches used in Location-Based Services (LBS). The table evaluates each method based on the type of mechanism employed, the adversary model, and whether privacy is quantifiable for location, trajectory, and points of interest (POI). The adversary model is categorized into three levels: low, where a trusted third party is assumed; medium, where the adversary does not have knowledge of the LPPM's setup; and strong, where the adversary is assumed to know all aspects of the used LPPM except the user's real location. The optimality column indicates whether the privacy mechanism can achieve an optimal balance between privacy and utility. The symbol (-) is used to denote cases where optimality in terms of utility has not been demonstrated.

\subsection{Proposed Contributions}
In this paper, we introduce several contributions that significantly advance the understanding of privacy leakage on location, trajectory, and POI during continuous interactions with a LBS. These contributions build upon the concept of $\delta$-
location set, originally introduced in \cite{xiao2015protecting}. Our contributions are as follows:
\begin{enumerate}
    \item We extend the concept of $\delta$-location set \cite{xiao2015protecting} by introducing $(\epsilon, \delta)$-location privacy. This extension provides a more stringent upper bound on privacy leakage concerning locations compared to previous approaches. Specifically, we consider a strong adversary capable of identifying the set of $\delta$-obfuscation locations (Definition \ref{def:delta_los}) used to release perturbed locations.
    
    \item Building upon $(\epsilon, \delta)$-location privacy, we introduce $(\epsilon, \delta, \theta)$-trajectory privacy. This notion bounds the privacy leakage on the trajectory of the user when multiple perturbed locations are released. Moreover, we establish a direct relationship between $(\epsilon, \delta)$-location privacy and $(\epsilon, \delta, \theta)$-trajectory privacy.
    
    \item To address privacy concerns regarding POI, we introduce $(\epsilon, \delta, \theta)$-POI privacy. We demonstrate a direct correlation between $(\epsilon, \delta, \theta)$-POI privacy and $(\epsilon, \delta)$-location privacy, enabling reasoned analysis of POI privacy in the case of continuous interaction with LBS.
    
    \item We provide a lower bound to quantify the optimal utility level achievable by a mechanism that satisfies $(\epsilon, \delta)$-location privacy. This bound serves as a benchmark for assessing the effectiveness of privacy-preserving mechanisms.
    
    \item Finally, we demonstrate the practical applicability of our framework by instantiating it with the Plannar Isotopic Mechanism \cite{xiao2015protecting}. This instantiation ensures optimal utility while quantifying the privacy leakage on locations, trajectory, and visited POIs through the formal links established within our framework.
\end{enumerate}
We strongly believe that our contributions meaningfully enhance the understanding and practical implementation of privacy-preserving mechanisms in LBS, offering a comprehensive framework for balancing utility and multifaceted  privacy leakages.
\subsection{Paper Organization}
Section \ref{sec:preliminaries} introduces the foundational concepts underlying our framework. In Section \ref{sec:system_arch}, we outline the system architecture and adversarial model. Section \ref{sec:trivial_model} examines a preliminary solution and its constraints. Our proposed framework is detailed in Section \ref{sec:the_proposed_model}, followed by an instantiation in Section \ref{sec:instantiation}. Section \ref{sec:evaluation} presents the evaluation results of our framework. Finally, Section \ref{sec:conclusion} summarizes the findings and concludes the paper.
%


\section{Preliminaries}
\label{sec:preliminaries}
\subsection{Local Differential Privacy}
Local Differential Privacy (LDP) \cite{kasiviswanathan2011can}, a variation of Differential Privacy (DP) \cite{dwork2016calibrating}, eliminates the requirement for a trusted data collector. This makes LDP a highly practical privacy-preserving solution across various applications, such as protecting location privacy during interactions with LBS \cite{xiao2015protecting,meehan2021location}.

\begin{definition}[$\varepsilon$-LDP)]
\label{def:LDP}
    Given a privacy budget $\varepsilon > 0$, a randomized mechanism $\mathcal{M}: \mathcal{X} \rightarrow \mathcal{Y}$ provides $\varepsilon$-LDP if and only if, for any two inputs $x, x' \in \mathcal{X}$ and any possible output $y \in \mathcal{Y}$, the following inequality holds:
\[
\Pr[\mathcal{M}(x) = y] \leq e^{\varepsilon} \times \Pr[\mathcal{M}(x') = y].
\]
\end{definition}
LDP enables users to locally perturb their data in order to guarantee plausible deniability, which is controlled by the privacy budget $\varepsilon$.

\subsection{Adversarial LDP}
In order to construct an effective privacy model, it is crucial to accurately quantify the amount of information that can be inferred by an adversary regarding the real location within publicly available data. The adversary's beliefs undergo revision each time a new location is released. These beliefs are depicted as a probability distribution reflecting the potential user's location among a set of possibilities. In particular, at any timestep $t$, the prior probability of the user's current position $l^*$ among a set of possible position $\mathcal{L}_t$ is denoted by $$Pr^-_{t} [l=l^*] = Pr[l=l^*|z_{t-1}, \cdots, z_1], l \in \mathcal{L}_t$$

Then, to capture the process of belief updating, we can use Bayesian inference to compute the posterior probability after observing the released position $z_t$, which is formally described as follows:
$$
Pr^+_{t}[l=l^*] = \frac{Pr[z_t|l=l^*] \cdot Pr^-_{t} [l=l^*]}{\sum_{l' \in \mathcal{L}_j} Pr[z_t|l'=l^*] \cdot Pr^-_{t} [l'=l^*]}, l \in \mathcal{L}_t
$$

In the adversarial definition of privacy, an adversary is modeled as a probability distribution, representing the adversary's prior belief. An algorithm is considered private if, after observing the algorithm's output, the posterior distribution (the adversary's updated belief) remains close to the prior distribution.

\begin{definition}[Adversarial LDP \cite{rastogi2009relationship}]
\label{def:ALDP}
Given a privacy budget $\varepsilon > 0$, A mechanism $\mathcal{M}:\mathcal{X} \rightarrow \mathcal{Y}$ satisfies $\epsilon$-ALDP if and only if, for any output $y \in \mathcal{Y}$, any input $x \in \mathcal{X}$, and any adversary knowing $\mathcal{X}$, the following holds:
\begin{equation}
    e^{-\epsilon} \cdot Pr[x] \leq Pr[x|y] \leq e^{\epsilon} \cdot Pr[x] 
    \nonumber
\end{equation}
\end{definition}

As shown in \cite{xiao2015protecting}, for continuous location sharing, we can derive 
from the PTLM property \cite{rastogi2009relationship} an equivalence between LDP (Definition \ref{def:LDP}) and ALDP (Definition \ref{def:ALDP}).

\begin{theorem}[\cite{xiao2015protecting}]
For the problem of continual location sharing, Definition \ref{def:ALDP} is equivalent to Definition \ref{def:LDP}.
\end{theorem}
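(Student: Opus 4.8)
The plan is to prove the two implications separately, each by an elementary manipulation of the Bayesian update displayed above, exploiting the fact that in continual location sharing the secret at each timestep $t$ is a single element of the finite set $\mathcal{L}_t$, so the adversary's belief is just a probability vector over $\mathcal{L}_t$ with no product structure. This is precisely the regime in which the PTLM property of \cite{rastogi2009relationship} yields a tight, loss-free correspondence, and I will reprove it directly in this specialized form rather than invoking it as a black box.

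First I would show $\varepsilon$-LDP $\Rightarrow$ $\varepsilon$-ALDP. Fix a mechanism $\mathcal{M}$ satisfying Definition \ref{def:LDP} on $\mathcal{L}_t$, an arbitrary prior $Pr^-_t$, a released value $z_t$, and a candidate location $l^*$. Writing $Pr^+_t[l=l^*]$ through the Bayesian update, the numerator is $Pr[z_t\mid l=l^*]\,Pr^-_t[l=l^*]$ and the denominator is $\sum_{l'\in\mathcal{L}_t} Pr[z_t\mid l=l']\,Pr^-_t[l=l']$. Applying the $\varepsilon$-LDP inequality to each pair $(l^*,l')$ gives $Pr[z_t\mid l=l'] \ge e^{-\varepsilon}\,Pr[z_t\mid l=l^*]$, hence the denominator is at least $e^{-\varepsilon}\,Pr[z_t\mid l=l^*]\sum_{l'} Pr^-_t[l=l'] = e^{-\varepsilon}\,Pr[z_t\mid l=l^*]$; substituting yields $Pr^+_t[l=l^*] \le e^{\varepsilon}\,Pr^-_t[l=l^*]$. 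The symmetric bound $Pr[z_t\mid l=l'] \le e^{\varepsilon}\,Pr[z_t\mid l=l^*]$ lower-bounds the denominator and gives $Pr^+_t[l=l^*] \ge e^{-\varepsilon}\,Pr^-_t[l=l^*]$, which together are exactly Definition \ref{def:ALDP}.

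Next, for $\varepsilon$-ALDP $\Rightarrow$ $\varepsilon$-LDP, I would use the freedom to choose the adversary: fix two locations $l_1,l_2\in\mathcal{L}_t$ and an output $z$, and instantiate the adversary with the prior $Pr^-[l=l_1]=p$, $Pr^-[l=l_2]=1-p$, placing either zero or a vanishing amount of mass on the remaining locations so the prior stays admissible. The Bayesian update gives $Pr^+[l=l_1] = \frac{Pr[z\mid l_1]\,p}{Pr[z\mid l_1]\,p + Pr[z\mid l_2]\,(1-p)}$; the $\varepsilon$-ALDP upper bound forces this to be $\le e^{\varepsilon}p$, and after cancelling $p$ and rearranging, $Pr[z\mid l_1] \le e^{\varepsilon}\bigl(Pr[z\mid l_1]\,p + Pr[z\mid l_2]\,(1-p)\bigr)$. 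Letting $p\to 0^{+}$ collapses this to $Pr[z\mid l_1] \le e^{\varepsilon}\,Pr[z\mid l_2]$, i.e. $Pr[\mathcal{M}(l_1)=z] \le e^{\varepsilon}\,Pr[\mathcal{M}(l_2)=z]$, which is Definition \ref{def:LDP}; since $l_1,l_2,z$ were arbitrary, this completes the equivalence.

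I expect the routine direction to be $\varepsilon$-LDP $\Rightarrow$ $\varepsilon$-ALDP; the delicate points all sit on the ALDP $\Rightarrow$ LDP side: arguing that Definition \ref{def:ALDP} genuinely quantifies over \emph{all} priors supported on $\mathcal{L}_t$ (so the two-point prior is legitimate), handling the degenerate cases where $Pr[z\mid l_1]$ or $Pr[z\mid l_2]$ vanishes (where one checks the two definitions fail or hold simultaneously), and making the $p\to 0^{+}$ limit rigorous — or, equivalently, packaging these observations as an application of the PTLM property of \cite{rastogi2009relationship}, whose single-tuple hypothesis is met here, so that no $2\varepsilon$ slack appears.
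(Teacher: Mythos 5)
Your argument is correct, but it is worth noting that the paper itself offers no proof of this statement: the theorem is imported verbatim from \cite{xiao2015protecting}, and the surrounding text simply attributes the equivalence to the PTLM property of \cite{rastogi2009relationship}. What you have done is reprove that correspondence from scratch in the specialized single-secret setting, and both directions go through. The forward direction (LDP $\Rightarrow$ ALDP) is the standard Bayes computation: bounding every likelihood ratio in the denominator by $e^{\pm\varepsilon}$ and using $\sum_{l'}Pr^-_t[l=l']=1$ gives exactly the two-sided posterior-to-prior bound. The converse is where your proof adds value over the citation: the two-point prior with $p\to 0^+$ is precisely the extremal adversary that extracts the raw likelihood ratio $Pr[z\mid l_1]/Pr[z\mid l_2]$ from the posterior bound, and it makes transparent \emph{why} no $2\varepsilon$ slack appears here --- the secret is a single element of $\mathcal{L}_t$ with no product structure, so the worst-case prior concentrates on two candidate locations, which is exactly the ``single-tuple'' regime in which PTLM is tight. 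You correctly flag the two genuine delicacies: that Definition \ref{def:ALDP}'s quantification over ``any adversary knowing $\mathcal{X}$'' must be read as quantification over all admissible priors (otherwise the two-point prior is not available and the converse can fail), and the degenerate cases $Pr[z\mid l_i]=0$, where one checks that a strict violation of LDP forces the posterior of $l_1$ to $1$ under the two-point prior and hence violates ALDP as well. In short: the paper buys brevity by black-boxing \cite{rastogi2009relationship}; your direct derivation buys a self-contained and loss-free proof whose hypotheses (full quantification over priors, handling of zero-probability outputs) are made explicit rather than inherited.
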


The previous theorem establishes that if a mechanism $\mathcal{M}$ satisfies $\epsilon$-ALDP, then it also satisfies $\epsilon$-LDP. Consequently, for the sake of consistency, we will exclusively refer to $\epsilon$-LDP throughout.

\subsection{Sensitivity Hull}
The concept of Sensitivity Hull \cite{xiao2015protecting} is based on in the extensively researched notion of Convex Hull within computational geometry.

\begin{definition}[Convex Hull]
    Let $X$ be a set of points in a Euclidean space. The convex hull of $X$ is the smallest convex set that contains $X$.
\end{definition}

Since Convex Hull is a well-explored concept, we forgo intricate explanations and simply utilize $CH(X)$ to represent the process of determining the convex hull of $X$. Given our focus on interaction with LBS, where locations are typically represented as 2-dimensional points, it's important to note that $CH(X)$, where $X$ represents a set of locations, forms a polygon.

\begin{definition}[Sensitivity Hull \cite{xiao2015protecting}]
    Given a set of locations $X$, the sensitivity hull of a function $f$ is $CH(\Delta f[X])$ where 
    \begin{equation}
        \nonumber
        \Delta f[X] = \bigcup\limits_{x_1, x_2 \in X} \{f(x_1)-f(x_2)\}
    \end{equation}
\end{definition}

\subsection{Measuring Utility}
\label{sec:utility_error}
In accordance with previous studies such as \cite{xiao2015protecting,shokri2011quantifying}, we evaluate the effectiveness of the obfuscation technique by quantifying its utility. This is done by calculating the Euclidean distance between the actual location $l^{*}$ and the released (obfuscated) location $z$, denoted as:
\begin{equation}
err = d(l^*, z)
\nonumber
\end{equation}
Here, $d$ represents the Euclidean metric.

\section{System Architecture and Adversary Model}
\label{sec:system_arch}
The proposed system architecture in the paper is a third-party-free framework consisting solely of users and LBS servers. In this architecture, users have the ability to autonomously perform location anonymization based on their individual privacy preferences. By obfuscating their personal data on their devices prior to transmission, users can enhance their overall experience with LBS while safeguarding their privacy. Furthermore, the paper focuses on continuous LBS, where users periodically or continuously report their current positions to obtain services. Adversaries in this context possess a collection of multiple or sequential queries gathered over time, creating a strong spatiotemporal correlation. Consequently, attackers can potentially track a specific user, reconstruct their trajectory, and infer the set of visited points of interest.

In our study, we analyze a strong adversary with significant capabilities. We posit that this adversary possesses supplementary information, which may enhance their ability to exploit available information. Specifically, we assume that if only a subset of potential locations is chosen for the release of a perturbed position (i.e., a perturbed position that make the real position (somewhat) indistinguishable within the chosen subset of potential locations), a strong adversary would possess the means to leverage their additional information and accurately determine which specific subset of candidate locations has been taken into consideration.

\section{Warm Up  : A Trivial Model}
\label{sec:trivial_model}
In our approach, we adopt a definition of location obfuscation based on LDP. We choose this framework because, as discussed in Section \ref{sec:related_work}, it provides the means to formally define and analyze location obfuscation mechanisms. Informally, location obfuscation aims to hide the precise location $l$ of an entity by making it indistinguishable within a range of potential locations $\mathcal{L}$, making it challenging to determine the exact one. 

We formalize the LDP-based location obfuscation based on the adversarial LDP (Definition \ref{def:ALDP}).

\begin{definition}[Location-based $\epsilon$-LDP \cite{xiao2015protecting}]
\label{def:adv-location_obfuscation}
Given a set of possible locations $\mathcal{L}$, a mechanism $\mathcal{M}$ provides location-based $\epsilon$-LDP, if and only if, for any location $l \in \mathcal{L}$, any output $z$, and any adversary knowing that the true location $l^*$ is in $\mathcal{L}$, the following holds
\begin{equation}
     Pr[l^* = l | z] \leq e^\epsilon \cdot  Pr[l^* = l]
    \nonumber
\end{equation}
\end{definition}

\begin{remark}
The definition in \ref{def:adv-location_obfuscation} focuses on the upper bound, ensuring that observing the output does not significantly increase the adversary's confidence in identifying the true location (the \emph{presence} property). A more general two-sided bound,
\[
   e^{-\epsilon} \cdot Pr[l^* = l] \leq Pr[l^* = l | z] \leq e^\epsilon \cdot Pr[l^* = l]
   \nonumber
\]
would also include the \emph{absence} property, preventing a significant decrease in this probability. However, since the primary objective is to limit the adversary’s ability to infer the true location, the lower bound is omitted for simplicity.
\end{remark}

\subsection{Capturing Trajectory Privacy}
Location obfuscation can be taken a step further by extending it to obfuscate the trajectory of the user among a set of possible trajectories, i.e., making the real trajectory DP indistinguishable within a set of possible trajectories. To achieve this, for every timestep, denoted as $t$, within the trajectory, and given the set of possible trajectories denoted as $\mathcal{T}$, the key concept is to construct the set of locations, denoted as $\mathcal{L}_t$, based on the potential locations the user could be if it followed each trajectory within $\mathcal{T}$. Specifically, at any given timestep $t$, we suppose that for each possible location $l$ in $\mathcal{L}_t$, and for each trajectory $T$ in $\mathcal{T}$, a probability representing the adversary's belief about the likelihood of reaching location $l$ via trajectory $T$ will be assigned. This probability is denoted as $Pr[l \leadsto T]$.

The following proposition uses the sequential composition theorem of LDP \cite{dwork2016calibrating} to bound the adversarial trajectory privacy when using an LDP location obfuscation mechanism.

\begin{proposition}
\label{prop:trajectory_privacy}
Given a set of trajectories $\mathcal{T}$. Let us denote by $\mathcal{L}_t$ the set of possible  positions inside each trajectory in $\mathcal{T}$ during the position release timestep $t \in [1,n]$. If we use an $\epsilon_i$-LDP location mechanism $\mathcal{M}_i$ for releasing the position a timestep $i$, then for any release timestep $t$, any output $z_i$, and any $T \in \mathcal{T}$, the following condition holds:
\begin{equation}
    Pr[T^* = T] \leq e^{\sum_{i=1}^t \epsilon_i} \cdot Pr[T^* = T|z_1, \cdots, z_t] \nonumber
\end{equation}
\end{proposition}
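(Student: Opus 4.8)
The plan is to lift the per‑timestep $\epsilon_i$‑LDP guarantee from individual locations to whole trajectories, invoke the sequential composition theorem on the lifted mechanisms, and then pass to the Bayesian (adversarial) form via Bayes' rule. Throughout, fix the trajectory $T \in \mathcal{T}$ and outputs $z_1,\dots,z_t$ appearing in the statement.

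\emph{Step 1 — lifting to trajectories.} Fix a timestep $i$ and regard the release $z_i$ as the output of a mechanism $\mathcal{M}_i'$ whose input is a trajectory: given $T$, the true position $l \in \mathcal{L}_i$ is drawn according to the belief $Pr[l \leadsto T]$ (which sums to one over $\mathcal{L}_i$), and then $z_i = \mathcal{M}_i(l)$, so that $Pr[\mathcal{M}_i'(T) = z_i] = \sum_{l \in \mathcal{L}_i} Pr[\mathcal{M}_i(l) = z_i]\,Pr[l \leadsto T]$. For any trajectories $T,T'$ and any $l \in \mathcal{L}_i$, the $\epsilon_i$‑LDP property of $\mathcal{M}_i$ gives $Pr[\mathcal{M}_i(l)=z_i] \le e^{\epsilon_i}\,Pr[\mathcal{M}_i(l')=z_i]$ for every $l' \in \mathcal{L}_i$; averaging the right‑hand side against the weights $Pr[l' \leadsto T']$ yields $Pr[\mathcal{M}_i(l)=z_i] \le e^{\epsilon_i}\,Pr[\mathcal{M}_i'(T')=z_i]$, and multiplying by $Pr[l \leadsto T]$ and summing over $l$ shows that $\mathcal{M}_i'$ is $\epsilon_i$‑LDP on the trajectory space. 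The key idea is that a convex mixture of output distributions, each within a factor $e^{\epsilon_i}$ of a common reference distribution, is itself within that factor.

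\emph{Step 2 — composition.} Since the perturbations at distinct timesteps are carried out independently, the joint release $(z_1,\dots,z_t)$ is produced by the composition $(\mathcal{M}_1',\dots,\mathcal{M}_t')$ of the trajectory‑level mechanisms, and the sequential composition theorem of LDP shows this composition is $\big(\sum_{i=1}^t \epsilon_i\big)$‑LDP; concretely, $Pr[(z_1,\dots,z_t)\mid T^*=T] \le e^{\sum_{i=1}^t \epsilon_i}\,Pr[(z_1,\dots,z_t)\mid T^*=T']$ for all $T,T' \in \mathcal{T}$. \emph{Step 3 — Bayesian form.} Apply Bayes' rule, $Pr[T^*=T \mid z_1,\dots,z_t] = \frac{Pr[(z_1,\dots,z_t)\mid T^*=T]\,Pr[T^*=T]}{\sum_{T'\in\mathcal{T}} Pr[(z_1,\dots,z_t)\mid T^*=T']\,Pr[T^*=T']}$, bound each denominator term using $Pr[(z_1,\dots,z_t)\mid T^*=T'] \le e^{\sum_i \epsilon_i}\,Pr[(z_1,\dots,z_t)\mid T^*=T]$, and use $\sum_{T'}Pr[T^*=T'] = 1$ to obtain $Pr[T^*=T \mid z_1,\dots,z_t] \ge e^{-\sum_{i=1}^t \epsilon_i}\,Pr[T^*=T]$, which rearranges to the claimed inequality. (Alternatively, one may apply the stated LDP–ALDP equivalence directly to the composed mechanism.)

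The main obstacle is Step 1: the mechanisms $\mathcal{M}_i$ are only assumed LDP over individual locations, whereas the quantity to be controlled is a statement about trajectories, and each trajectory induces an entire distribution over $\mathcal{L}_i$ rather than a single point; the mixture argument is precisely what bridges this gap and makes the guarantee composable in Step 2. A secondary point requiring care is the conditional independence of the per‑timestep releases given the trajectory, which is what licenses the use of the sequential composition theorem; this reflects the modeling convention that each location is perturbed independently at its own timestep.
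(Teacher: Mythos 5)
Your proposal is correct, but it follows a genuinely different route from the paper's. The paper works entirely in the Bayesian (adversarial) form: it starts from the per-location posterior/prior bound $Pr^+_i[l=l^*_i]\le e^{\epsilon_i}Pr^-_i[l=l^*_i]$, multiplies by $Pr^-[l\leadsto T]$, uses the assumed independence of the events $l=l^*_i$ and $l\leadsto T$ to convert products into joint probabilities, sums over $l\in\mathcal{L}_i$ to obtain a single-timestep trajectory bound, and then chains these bounds by induction using the convention $Pr^+_{i}[T=T^*]=Pr^-_{i+1}[T=T^*]$. You instead stay in the likelihood-ratio form of LDP: you lift each $\mathcal{M}_i$ to a trajectory-level mechanism via a mixture argument, compose the lifted mechanisms with the sequential composition theorem, and pass to the Bayesian form only once at the end via Bayes' rule. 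Your mixture/composition argument is cleaner in that it separates the privacy amplification step from the belief-update step, and — notably — it delivers the inequality in the direction the proposition actually states, $Pr[T^*=T]\le e^{\sum_i\epsilon_i}\cdot Pr[T^*=T\mid z_1,\dots,z_t]$, whereas the paper's own derivation produces the reverse direction (posterior bounded by $e^{\sum_i\epsilon_i}$ times the prior); the two coincide only if one invokes the two-sided LDP/ALDP equivalence, which your route makes explicit. The paper's approach, in exchange, makes visible exactly where the independence assumption between $l_i=l^*$ and $l_i\leadsto T$ enters, which is the same structural step reused in the proofs of Theorems \ref{thm:link-LP-TP} and \ref{thm:link-LP-POIP}. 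One small caveat on your Step 2: what is needed is conditional independence of the releases given the \emph{locations} (which holds by construction, since $\mathcal{M}_i$ sees only $l_i$), not given the trajectory; with correlated locations along a path the product decomposition of $Pr[(z_1,\dots,z_t)\mid T]$ fails, but your mixture argument still goes through by averaging the pointwise bound $\prod_i Pr[\mathcal{M}_i(l_i)=z_i]\le e^{\sum_i\epsilon_i}\prod_i Pr[\mathcal{M}_i(l_i')=z_i]$ over the joint law of $(l_1,\dots,l_t)$ given $T$, so this is a phrasing issue rather than a gap.
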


The proof of the preceding proposition is provided in Appendix \ref{proof:prop:trajectory_privacy}.

\subsection{Limitation}
In several scenarios, multiple interactions with the LBS occur at different timestep. In such cases, it becomes necessary not only to ensure the privacy of individual locations but also to consider protecting the user's trajectory. In this section, we explore the limitations of using DP for location obfuscation when aiming to preserve the privacy of the user's trajectory and visited POIs.

Therefore, we employ the LDP based location obfuscation technique and demonstrate that it is challenging to achieve a satisfactory balance between trajectory privacy and the utility derived from the LBS.

\begin{figure}[h!]
    \centering
    \begin{subfigure}[b]{0.22\textwidth}
        \centering
        \includegraphics[width=\textwidth]{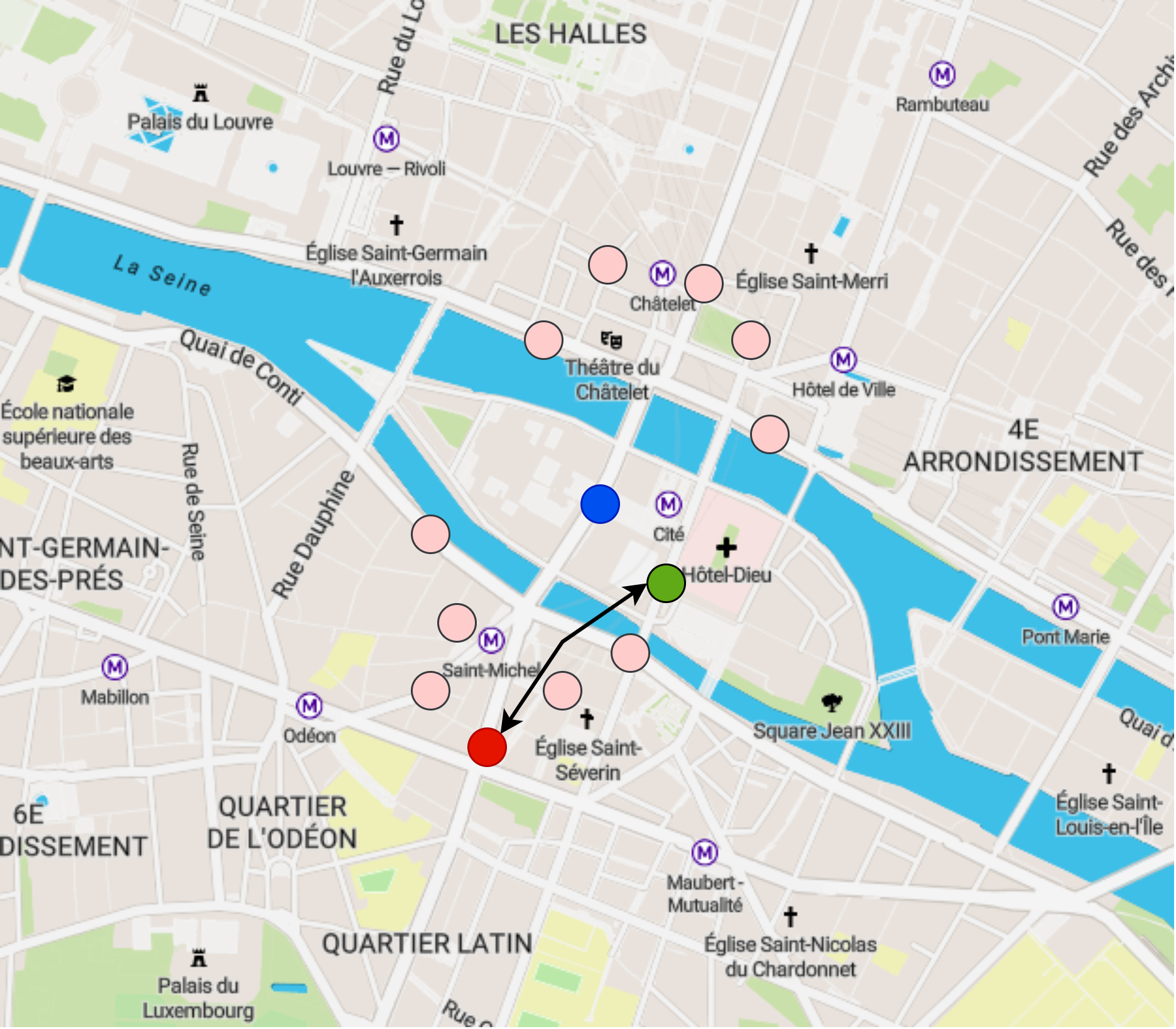}
        \caption[]%
        {{\small Timestep 1: \texttt{err}$ = 0.2$ \texttt{km}}}    
        \label{fig:mean and std of net14}
    \end{subfigure}
    \hfill
    \begin{subfigure}[b]{0.22\textwidth}  
        \centering 
        \includegraphics[width=\textwidth]{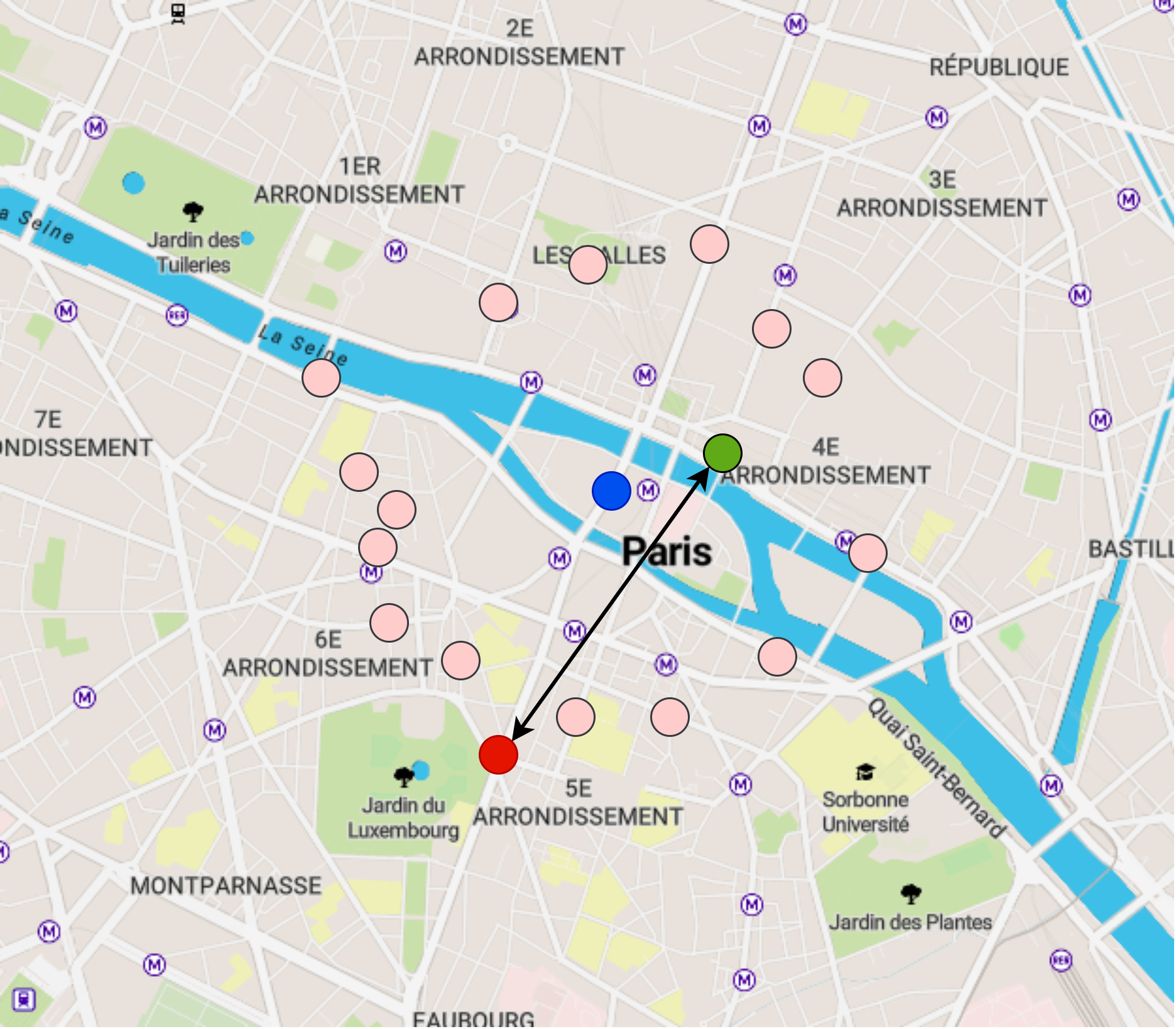}
        \caption[]%
        {{\small Timestep 2: \texttt{err}$ = 1.2$ \texttt{km}}}    
        \label{fig:mean and std of net24}
    \end{subfigure}
    \vskip\baselineskip
    \begin{subfigure}[b]{0.22\textwidth}   
        \centering 
        \includegraphics[width=\textwidth]{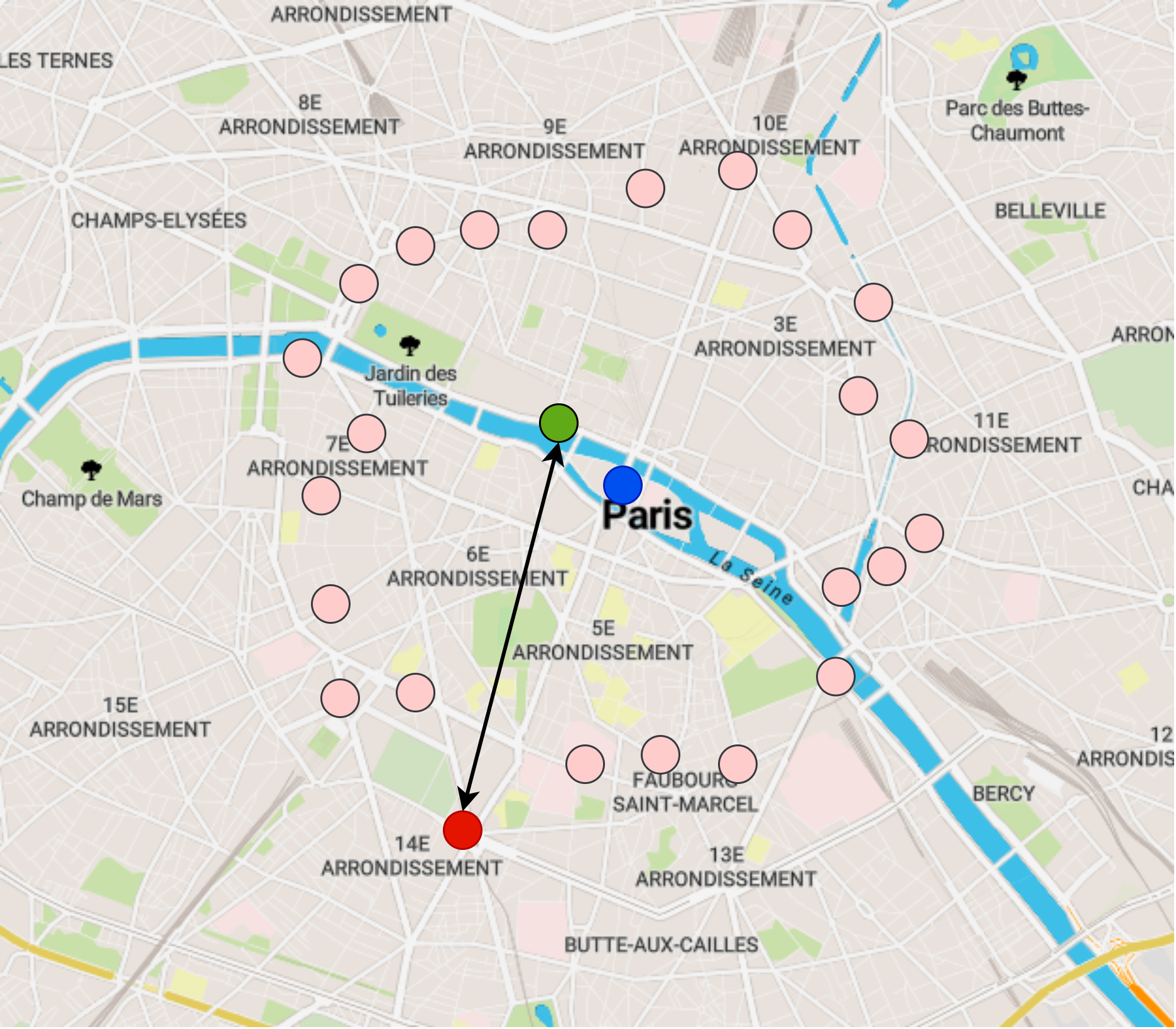}
        \caption[]%
        {{\small Timestep 3: \texttt{err}$ = 2.7$ \texttt{km}}}    
        \label{fig:mean and std of net34}
    \end{subfigure}
    \hfill
    \begin{subfigure}[b]{0.22\textwidth}   
        \centering 
        \includegraphics[width=\textwidth]{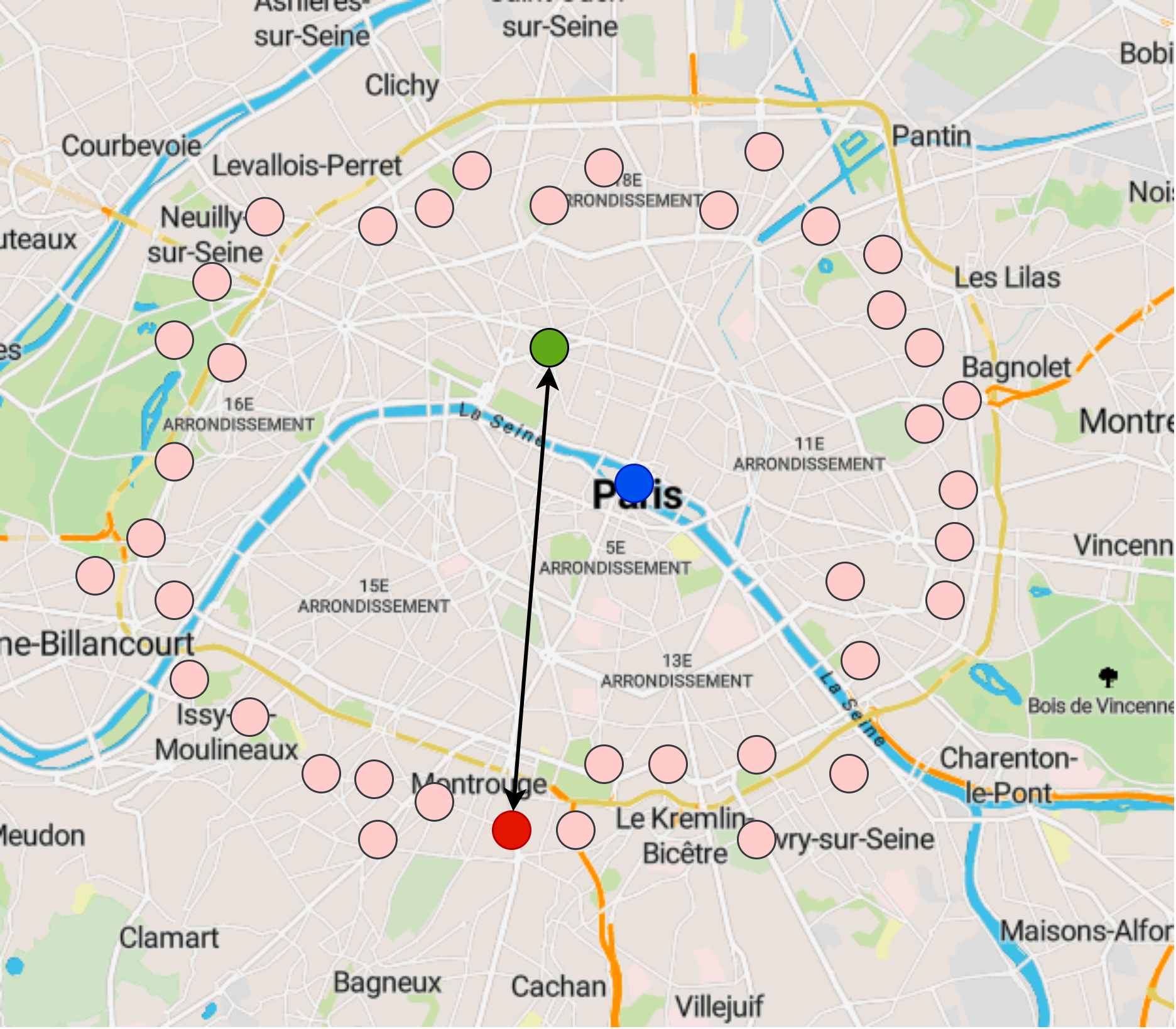}
        \caption[]%
        {{\small Timestep 4: \texttt{err}$ = 6.2$ \texttt{km}}}    
        \label{fig:mean and std of net44}
    \end{subfigure}
    \caption[]
    {\small Temporal Evolution of captured utility with LDP location obfuscation mechanism. The initial position of the user is denoted by the blue point ({\color{blue}$\bullet$}) in the figure. The pink points ({\color{babypink}$\bullet$}) represent the adversary's perception of the potential positions at a specific timestep, which is inferred based on various background information available to the adversary, such as road traffic conditions. The red point ({\color{red}$\bullet$}) indicates the actual (true) position of the user, while the green point ({\color{green}$\bullet$}) represents the obfuscated position. Finally, the arrows $\leftrightarrow$ depicts \texttt{err}, the loss of utility quantified as the distance between the obfuscated shared location and the actual location of the user. In each on the timesteps, obfuscated position was derived using the
Planar Isotropic Mechanism \cite{xiao2015protecting}.}  
    \label{fig:mean and std of nets}
\end{figure}

Figure 1 illustrates the temporal evolution of captured utility over four consecutive timesteps. The experiment involves the implementation of location-based obfuscation techniques \cite{xiao2015protecting} to protect the user's trajectory. It is worth mentioning that the released position was obfuscated using the differentially private Planar Isotropic Mechanism \cite{xiao2015protecting}. The reported figures clearly illustrate that the utility provided by the location obfuscation technique diminishes rapidly as more perturbed locations are released. This decline is primarily attributed to our consistent effort to make the released location LDP-indistinguishable to all potential positions for each timestep, effectively concealing the user's trajectory among all possible paths. By striving to ensure the previous, the technique's utility is compromised, resulting in a swift decrease in its provided utility.

\subsection{Towards An Adversary-Aware Privacy Framework}
The limitations of the previous, the aforementioned trivial approach highlight that indiscriminately obfuscating all positions—irrespective of the adversary’s potential knowledge—quickly erodes utility, as the masked locations become increasingly imprecise over time. A radius-based method, such as \cite{andres2013geo,meehan2021location}, may achieve LDP within a defined geographic region, but it fails to impose a consistent privacy bound across all possible locations. By factoring in the adversary’s prior knowledge, our framework is able to accurately quantify privacy loss for a given adversary profile, optimizing utility while maintaining robust privacy guarantees. Although this results in privacy bounds that depend on specific adversarial assumptions, we believe this targeted approach marks a significant advancement, providing both practical and measurable privacy protections.

\section{The Proposed Model}
\label{sec:the_proposed_model}
In this section, we define a model that allows a flexible compromise between the utility captured through the usage of the LBS and the level of ensured privacy.
We first formulate a model that can bound the privacy loss arising from the disclosure of perturbed positional data, considering the location perspective. We then extend the proposed model to encompass the privacy loss, considering both the trajectory and the visited points of interest perspectives.

\subsection{Intuition}
To overcome the limitation of the trivial model proposed in Section \ref{sec:trivial_model}, the entity responsible for obfuscation, i.e., the user, must strike a balance between the level of utility provided (i.e., the distance between the released perturbed position and the actual one) and the level of ensured privacy. To achieve this, during each interaction with the LBS, the user must adaptively obfuscate the real position of the user using a subset of candidate locations, denoted by $\mathcal{L}^*$, out of all possible locations $\mathcal{L}$ to achieve a specific utility/privacy trade-off.

\subsection{Privacy Leakage}
Building on the previous intuition, the leakage profiles, illustrated in Figure \ref{fig:leakage-profile}, can be categorized into two main types:
\begin{itemize}
\item Global leakage: It captures the change in an adversary's advantage in knowing the real position of the user before and after the selection of the set of locations $\mathcal{L}^*$ used to perturb the actual user location. In particular, we consider powerful adversary that can potentially deduce the set of locations that were not considered by the obfuscation mechanism $\mathcal{L} \backslash \mathcal{L}^*$. This knowledge can significantly enhance the adversary's ability to identify the true position of the user, and thus needs to be considered in quantifying privacy leakage.

\item Local leakage: it refers to the assessment of how the release mechanism affects an adversary's ability to determine the true position of the user within a specific subset of candidate locations $\mathcal{L}^*$. It involves quantifying the information revealed about the actual user location before and after the perturbed location is released, given the known subset of candidate locations (i.e., the global leakage).
\end{itemize}

\begin{figure}[H]
    \centering
    \includegraphics[scale=0.09]{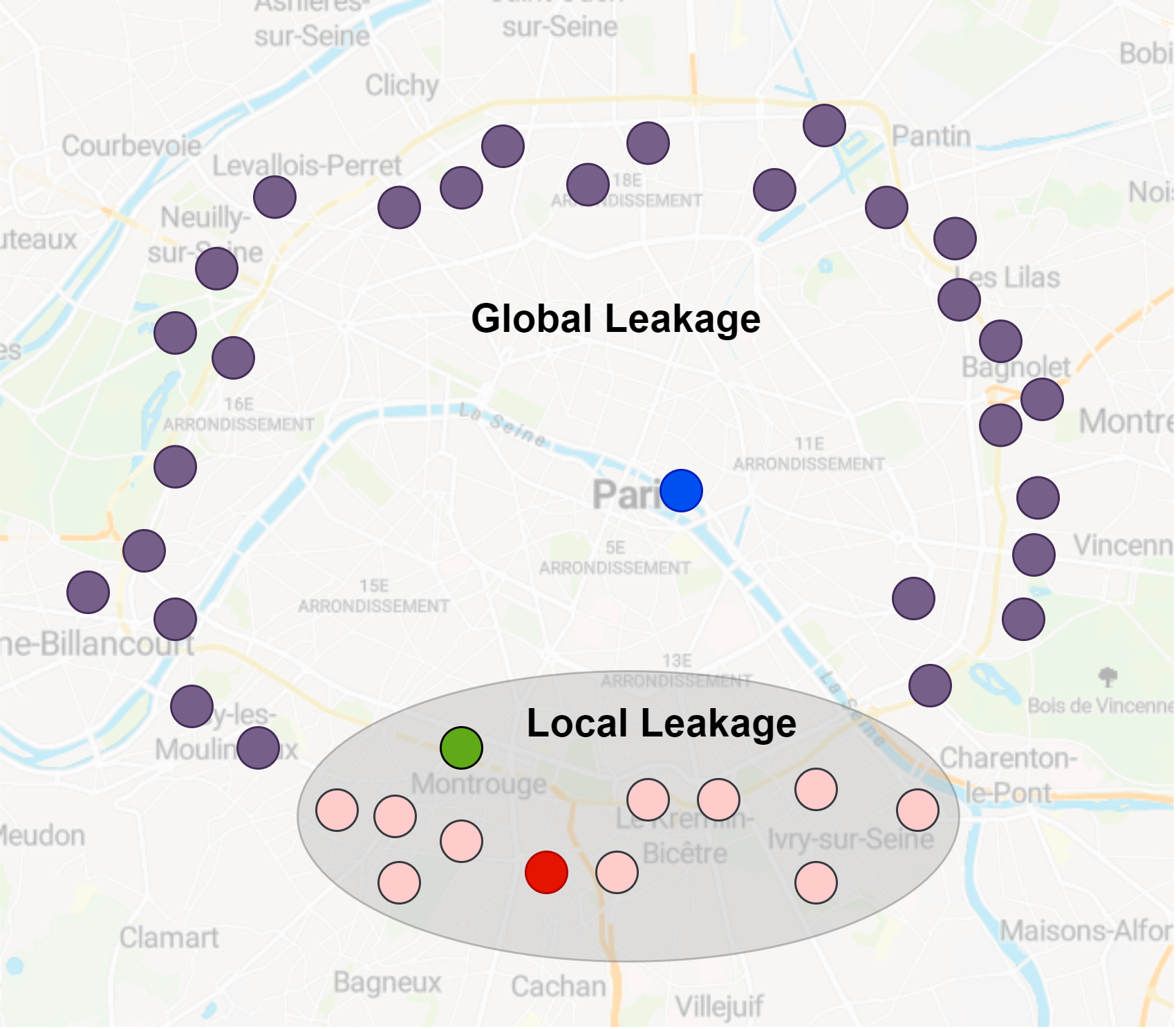}
    \caption{The leakage profiles considered in our model. The blue point ({\color{blue}$\bullet$}) marks the initial position of the user. In contrast, the pink points ({\color{babypink}$\bullet$}) represent a subset of candidate locations chosen for obfuscating the actual user's location, which is symbolized by the red point ({\color{red}$\bullet$}). Lastly, the violet point ({\color{violet}$\bullet$}) signifies potential locations that were excluded from the process of obfuscating the current user's location.}
    \label{fig:leakage-profile}
\end{figure}

\subsection{Location-based Privacy Loss}
Similarly to the approach proposed in \cite{xiao2015protecting}, at a timestep $t$, we define $\delta$-obfuscation set to be the set of candidate locations $\mathcal{L}^*$ to be used to obfuscate the real location of the user.

\begin{definition}[$\delta$-obfuscation set] 
\label{def:delta_los}
Let $\mathcal{L}_t$ be set of possible position at timestep $t$, and $Pr^-_t$ be the prior probability of the user current location $l_t^*$. At the timestep $t$, the $\delta$-obfuscation set , denoted by $\mathcal{L}^*_t$, is a subset of locations $\mathcal{L}_t$ such that the
prior probability sum of $l_i = l^*, l_i \in \mathcal{L}^*_t$ equals to $\delta$.
\begin{equation}
    \nonumber
    \mathcal{L}^*_t = \left\{l_i \in \mathcal{L}|\sum_{l_i}Pr^-_t[l_i = l^*_t] = \delta \right\}
\end{equation}
\end{definition}

\begin{remark}
    In the sequel, for the sake of simplicity, we will assume that the true location of the user at timestep $t$ belongs to the selected $\delta$-obfuscation set $\mathcal{L}^*_t$. Please note that when the previous assumption is invalid, we can employ the surrogate location, i.e., the location in $\mathcal{L}^*_t$ with the shortest distance to the true location of the user, as if it were the true location in the release mechanism, as proposed in \cite{xiao2015protecting}.
\end{remark}

The concept of the $\delta$-obfuscation set, combined with the ALDP concept (Definition  \ref{def:adv-location_obfuscation}
), allows us to encompass the principle of LDP on $\delta$-obfuscation set, with the intuition that, for a timestep $t$, the usage of an $\epsilon$-LDP mechanism to released location $z_t$ will not help an adversary to link the real location of user and any location in the $\delta$-obfuscation set  by more than an $\epsilon$-fold factor.

\begin{definition}[LDP on $\delta$-obfuscation set]
\label{def:DP_los}
Given a timestep $t$ and a $\delta$-obfuscation set $\mathcal{L}^*_t$, a mechanism $\mathcal{M}$ satisfied $\epsilon$-LDP on a set of locations $\mathcal{L}^*_t$ if and only if, for any output $z_t$, and any locations $l_i$ in $\mathcal{L}^*_t$, and any adversaries knowing the true location $l^*_t$ is in
$\mathcal{L}^*_t$, the following holds:
\begin{equation}
    \nonumber
    Pr_t^+[l_i = l_t^*] \leq e^\epsilon \cdot Pr_t^-[l_i = l^*_t]
\end{equation}
\end{definition}

\begin{remark}
    According to the previous definition, a differentially private mechanism $\mathcal{M}$ ensures $\epsilon$-LDP of the true location $l^*_t$ within $\mathcal{L}^*_t$. We observe that as the value of $\delta$ approaches $1$, the privacy level it ensures becomes stronger, as the real location of the user will be $\epsilon$-LDP from the most probable set of locations.
\end{remark}

Above definition allows to bound the local leakage on the real position of the user according to the set of considered candidate positions in the $\delta$-obfuscation set. Nevertheless, it does not provide a means to measure the overall information leakage on the location of the user that arises from the selection of candidate locations included in the  $\delta$-obfuscation set (i.e., the global leakage) and the released location. Given our assumption of a strong adversary capable of accurately deducing the $\delta$-obfuscation set under consideration at each time step, this knowledge can be leveraged to enhance their confidence in determining the actual location of the user.

The subsequent theorem introduces the concept of $(\epsilon,\delta)$-location privacy, which permits the establishment of an upper limit on the overall privacy loss incurred when employing a LDP location mechanism over a $\delta$-location obfuscation set to disclose a perturbed location. The parameter $\delta$ bounds the global privacy loss that arises from the selection of a subset of locations, specifically denoted by $\delta$-obfuscation set used to hide the true user location. On the other hand, the parameter $\epsilon$ is employed to constrain the local privacy loss i.e., local leakage within the $\delta$-obfuscation set, incurred when disclosing the perturbed location data.

\begin{definition}[$(\epsilon,\delta)$-location privacy] 
    Let $\mathcal{L}_t$ be the set of possible positions at timestep $t$, $\mathcal{L}^*_t$ be the $\delta$- obfuscation set to be used at timestep $t$. A mechanism $\mathcal{M}$ satisfies $(\epsilon,\delta)$-location privacy on $\mathcal{L}$ if and only if, for any output $z_t$, any locations $l_i$ in $\mathcal{L}$, and any adversaries knowing the true location $l^*_t$ is in $\mathcal{L}^*_t$, the following conditions hold:
    \begin{numcases}{}
        Pr^+[l_i = l^*_t] = 0, & if $l_t \notin \mathcal{L}^*_t$ \label{eq:thm1_first} \\
        Pr^+[l_i = l^*_t | l^*_t \in \mathcal{L}^*_t] \leq \delta^{-1} \cdot e^\epsilon \cdot Pr^-[l_i = l^*_t],              & otherwise \label{eq:thm1_second}
    \end{numcases}
\end{definition}

In the previous definition, Equation \ref{eq:thm1_first} states that upon defining the $\delta$- obfuscation set, labeled as $\mathcal{L}^*_t$, and assuming that the user's actual location, denoted as $l^*$, must be within $\mathcal{L}^*_t$, the posterior probability (the adversary belief) of any other potential location, which fall outside of $\mathcal{L}^*_t$, being the correct user location $l^*$ is zero. On the other hand, Equation \ref{eq:thm1_second} defines the upper limit on privacy loss by establishing a connection between the initial probability that the candidate position $l_i$ is the actual user's position and the posterior probability that arises once we have defined the $\delta$-location obfuscation set and released the perturbed position $z$. 

\begin{theorem}
\label{thm:location-privacy}
    Let $\mathcal{L}_t$ be the set of possible positions at timestep $t$, $\mathcal{L}^*_t$ be the $\delta$- obfuscation set to be used at timestep $t$. If a mechanism $\mathcal{M}$ satisfies $\epsilon$-LDP on $\mathcal{L}^*_t$, then, for any output $z_t$, any locations $l_t$ in $\mathcal{L}$, it satisfies $(\epsilon,\delta)$-location privacy on $\mathcal{L}$.
\end{theorem}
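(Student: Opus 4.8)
The plan is to verify the two cases of the $(\epsilon,\delta)$-location privacy definition separately, using the conditioning event $l^*_t \in \mathcal{L}^*_t$ as the pivot that ties Definition~\ref{def:DP_los} — which is stated from the viewpoint of an adversary who already knows $l^*_t \in \mathcal{L}^*_t$ — to the unconditional prior $Pr^-$ appearing on the right-hand side of Equation~\ref{eq:thm1_second}. I would first state once, up front, that throughout the argument the probabilities in Definition~\ref{def:DP_los} are understood as those of an adversary who has restricted attention to $\mathcal{L}^*_t$, i.e.\ conditioned on $l^*_t \in \mathcal{L}^*_t$, and that $\delta > 0$ since $\mathcal{L}^*_t$ contains the true location with positive prior mass.

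For the first case, take any candidate location $l_i \notin \mathcal{L}^*_t$. Since every adversary under consideration knows that the true location lies in $\mathcal{L}^*_t$, the prior mass assigned to the event $\{l_i = l^*_t\}$ is already zero; applying the Bayesian update formula for $Pr^+$ — a ratio whose numerator carries the factor $Pr^-[l_i = l^*_t]$ — keeps it zero after $z_t$ is observed. This yields Equation~\ref{eq:thm1_first} immediately.

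For the second case, fix $l_i \in \mathcal{L}^*_t$ and an output $z_t$. I would first record that, for $l_i \in \mathcal{L}^*_t$, the event $\{l_i = l^*_t\}$ is contained in $\{l^*_t \in \mathcal{L}^*_t\}$, so $Pr^-[l_i = l^*_t \cap l^*_t \in \mathcal{L}^*_t] = Pr^-[l_i = l^*_t]$, while by Definition~\ref{def:delta_los} the normalizing constant is $Pr^-[l^*_t \in \mathcal{L}^*_t] = \sum_{l_j \in \mathcal{L}^*_t} Pr^-[l_j = l^*_t] = \delta$. Hence the conditional prior satisfies $Pr^-[l_i = l^*_t \mid l^*_t \in \mathcal{L}^*_t] = \delta^{-1} \cdot Pr^-[l_i = l^*_t]$. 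Now I invoke the hypothesis that $\mathcal{M}$ is $\epsilon$-LDP on $\mathcal{L}^*_t$ (Definition~\ref{def:DP_los}), whose prior and posterior are exactly the distributions of an adversary conditioned on $l^*_t \in \mathcal{L}^*_t$: this gives $Pr^+[l_i = l^*_t \mid l^*_t \in \mathcal{L}^*_t] \le e^\epsilon \cdot Pr^-[l_i = l^*_t \mid l^*_t \in \mathcal{L}^*_t]$. Chaining the two relations produces $Pr^+[l_i = l^*_t \mid l^*_t \in \mathcal{L}^*_t] \le \delta^{-1} e^\epsilon \cdot Pr^-[l_i = l^*_t]$, which is Equation~\ref{eq:thm1_second}.

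The only genuinely delicate point is the bookkeeping: one must be explicit about whether each $Pr^+$/$Pr^-$ is taken with respect to the full location set $\mathcal{L}_t$ or with respect to the adversary who has already restricted to $\mathcal{L}^*_t$, since the $\delta^{-1}$ factor is precisely the ratio between these two normalizations, and the identity $Pr^-[l^*_t \in \mathcal{L}^*_t] = \delta$ from Definition~\ref{def:delta_los} is what converts one into the other. Everything else is a one-line substitution; no geometric argument, sequential composition, or appeal to the sensitivity hull is needed for this particular theorem.
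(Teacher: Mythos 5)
Your proof is correct and follows essentially the same route as the paper's: case (1) is handled by noting the conditioning adversary assigns zero mass outside $\mathcal{L}^*_t$, and case (2) chains the $\epsilon$-LDP bound of Definition~\ref{def:DP_los} (read as conditional on $l^*_t \in \mathcal{L}^*_t$) with the identity $Pr^-[l_i = l^*_t \mid l^*_t \in \mathcal{L}^*_t] = \delta^{-1}\cdot Pr^-[l_i = l^*_t]$, which is exactly the paper's Equations~\ref{eq:dp_th1_app}--\ref{eq:th1_lasts}. Your extra remark about tracking which normalization each probability uses is a welcome clarification but does not change the argument.
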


The preceding theorem asserts that in the presence of a strong adversary capable of accurately deducing the considered $\delta$-obfuscation set out of all probable locations $\mathcal{L}$ and knowing that $l^*_t \in \mathcal{L}_t^*$, if mechanism $\mathcal{M}$ satisfies $\epsilon$-LDP on $\mathcal{L}^*_t$, it inherently satisfies $(\epsilon,\delta)$-location privacy on $\mathcal{L}_t$. The
proof of the previous theorem is provided in Appendix \ref{proof:theorem_location_privacy}.

\subsection{Trajectory-based Privacy Loss Quantification}
The previous section bounds the privacy loss concerning the user location when both the set of positions employed to obfuscate the actual user location and the released perturbed location become accessible to an adversary. In the subsequent, we broaden the model's scope to encompass the loss of privacy in relation to the trajectory of the user, under the same conditions where the adversary possesses knowledge of both the positions used for obfuscation and the resulting perturbed location.

To achieve the aforementioned objective, we introduce the notion of potential trajectories, which we represent as $\mathcal{T}$. In practical terms, these trajectories can be constructed using a combination of the possible destinations  within the designated geographic area and the user's initial location. The set of trajectories, denoted as $\mathcal{T}$, encompasses the shortest paths, whether measured in time or distance, from the user's current starting position to the considered destinations.

We now have all the necessary elements to establish a privacy bound concerning trajectories. The following definition introduces ($\epsilon, \delta, \theta$)-trajectory privacy, which allows to bound the privacy loss with respect to trajectories when both the set of positions used to obfuscate the true user location and the disclosed perturbed location are accessible to an adversary.

\begin{definition}[($\epsilon, \delta, \theta$)-trajectory privacy]
\label{def:trajecroty_privacy}
    Let $\mathcal{T}$ be the set of possible trajectories, $T^*$ be the true trajectory of the user, $\mathcal{L}^*_t$ be the $\delta$-obfuscation set to be used at timestep $t$, a mechanism $\mathcal{M}$ satisfies ($\epsilon, \delta, \theta$)-trajectory privacy if and only if, for all trajectory $T \in \mathcal{T}$, for any output $z_t$ representing the released location, and any adversaries knowing the true location $l^*_t$ is in $\mathcal{L}^*_t$, the following holds:
    \begin{equation}
        Pr^+[T = T^* | l^* \in \mathcal{L}^*] \leq \delta^{-1} \cdot e^\epsilon \cdot Pr^-[T = T^*] + \theta
        \nonumber
    \end{equation}
   
\end{definition}

We can now formally establish a connection between ($\epsilon, \delta$)-location privacy and ($\epsilon, \delta, \theta$)-trajectory privacy through the following theorem, where we demonstrate that if a mechanism $\mathcal{M}$ satisfies ($\epsilon, \delta$)-location privacy, then it also satisfies ($\epsilon, \delta, \theta$)-trajectory privacy for a specific value of $\theta$.

\begin{theorem}
    \label{thm:link-LP-TP}
    Let $\mathcal{T}$ be the set of possible trajectories, $\mathcal{L}_t$ be the set of possible positions at timestep $t$, and $\mathcal{L}^*_t$ be the $\delta$-obfuscation set to be used at timestep $t$. A mechanism $\mathcal{M}$ that satisfies $(\epsilon, \delta)$-location privacy for any output $z_t$ representing the released location and any adversaries knowing the true location $l^*_t$ is in $\mathcal{L}^*_t$, satisfies 
    ($\epsilon, \delta, \theta$)-trajectory privacy for any trajectory $T \in \mathcal{T}$
    where 
    \begin{equation}
    \theta = (\delta-1) \cdot \min\limits_{l \in \mathcal{L}\backslash \mathcal{L}^*} Pr[l \leadsto T]
    \nonumber
    \end{equation}
\end{theorem}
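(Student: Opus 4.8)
The plan is to decompose the posterior trajectory belief $Pr^+[T=T^*\mid l^*\in\mathcal{L}^*]$ according to the (unknown) true position of the user at the release timestep $t$, transfer the $(\epsilon,\delta)$-location-privacy guarantee to each term, and then reassemble. Throughout write $\mathcal{L}=\mathcal{L}_t$, $\mathcal{L}^*=\mathcal{L}^*_t$, and read $Pr[l\leadsto T]$ as the prior joint belief $Pr^-[l^*_t=l,\ T^*=T]$, so that $\sum_{l\in\mathcal{L}}Pr[l\leadsto T]=Pr^-[T^*=T]$ by total probability over the partition $\{l^*_t=l:l\in\mathcal{L}\}$. Starting from
\[
Pr^+[T=T^*\mid l^*\in\mathcal{L}^*]=\sum_{l\in\mathcal{L}}Pr^+[T=T^*,\ l^*_t=l\mid l^*\in\mathcal{L}^*],
\]
I would first discard every $l\notin\mathcal{L}^*$: such a summand is at most $Pr^+[l^*_t=l\mid l^*\in\mathcal{L}^*]$, which equals $0$ by Equation~\ref{eq:thm1_first} of $(\epsilon,\delta)$-location privacy. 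This leaves only the trajectory mass routed through $\mathcal{L}^*$.

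The crux is a transport observation: the release mechanism $\mathcal{M}$ perturbs only the current location, and the $\delta$-obfuscation set $\mathcal{L}^*_t$ is a deterministic function of the earlier releases (through $Pr^-_t$), so given the true position $l^*_t$ the output $z_t$ is conditionally independent of the true trajectory $T^*$. Hence conditioning additionally on $z_t$ does not move the adversary's belief about $T^*$ once $l^*_t$ is fixed, i.e.\ $Pr^+[T=T^*\mid l^*_t=l]=Pr^-[T=T^*\mid l^*_t=l]$ for every $l$. For $l\in\mathcal{L}^*$ we have $\{l^*_t=l\}\subseteq\{l^*\in\mathcal{L}^*\}$, hence
\[
Pr^+[T=T^*,\ l^*_t=l\mid l^*\in\mathcal{L}^*]=Pr^-[T=T^*\mid l^*_t=l]\cdot Pr^+[l^*_t=l\mid l^*\in\mathcal{L}^*].
\]
Bounding the second factor by Equation~\ref{eq:thm1_second} ($Pr^+[l^*_t=l\mid l^*\in\mathcal{L}^*]\le \delta^{-1}e^{\epsilon}Pr^-[l^*_t=l]$) and using $Pr^-[T=T^*\mid l^*_t=l]\cdot Pr^-[l^*_t=l]=Pr[l\leadsto T]$ yields the per-location estimate $Pr^+[T=T^*,\ l^*_t=l\mid l^*\in\mathcal{L}^*]\le\delta^{-1}e^{\epsilon}Pr[l\leadsto T]$.

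Summing over $l\in\mathcal{L}^*$ and rewriting $\sum_{l\in\mathcal{L}^*}Pr[l\leadsto T]=Pr^-[T=T^*]-\sum_{l\in\mathcal{L}\backslash\mathcal{L}^*}Pr[l\leadsto T]$ gives
\[
Pr^+[T=T^*\mid l^*\in\mathcal{L}^*]\ \le\ \delta^{-1}e^{\epsilon}Pr^-[T=T^*]\ -\ \delta^{-1}e^{\epsilon}\sum_{l\in\mathcal{L}\backslash\mathcal{L}^*} Pr[l\leadsto T].
\]
To match the claimed bound it then suffices to verify $\delta^{-1}e^{\epsilon}\sum_{l\in\mathcal{L}\backslash\mathcal{L}^*}Pr[l\leadsto T]\ \ge\ (1-\delta)\min_{l\in\mathcal{L}\backslash\mathcal{L}^*}Pr[l\leadsto T]=-\theta$, which holds because $\delta\in(0,1]$ forces $\delta^{-1}e^{\epsilon}\ge 1\ge 1-\delta$, and a sum of non-negative terms over the non-empty set $\mathcal{L}\backslash\mathcal{L}^*$ dominates its smallest term. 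The boundary case $\mathcal{L}^*=\mathcal{L}$ (so $\delta=1$ and $\theta=0$) is immediate from the same summation.

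The step I expect to be the main obstacle is stating and justifying the transport observation rigorously: one must be explicit that $\mathcal{M}$ is memoryless in the trajectory and that $\mathcal{L}^*_t$ is built only from the public prior $Pr^-_t$, never from $T^*$, so that $z_t$ and $T^*$ are independent conditionally on $l^*_t$ (and the past releases); granting this, the rest is bookkeeping with total probability. A secondary, purely cosmetic point is that the final estimate is deliberately loose — the stated $\theta$ is merely a convenient (and, since $\delta\le 1$, non-positive) upper bound on the trajectory mass discarded together with $\mathcal{L}\backslash\mathcal{L}^*$ — so the only care needed there is to keep the inequality pointing in the right direction.
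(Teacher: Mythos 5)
Your proof is correct and follows essentially the same route as the paper's: kill the trajectory mass routed through $\mathcal{L}\backslash\mathcal{L}^*$ using condition (1) of $(\epsilon,\delta)$-location privacy, apply condition (2) to each surviving location term, re-sum via total probability, and absorb the discarded prior mass into $\theta$; your ``transport observation'' is precisely the paper's (implicit) assumption that the output of $\mathcal{M}$ does not affect $Pr[l_i\leadsto T]$ and that $l_i=l^*_t$ and $l_i\leadsto T$ are independent. The one substantive divergence is semantic: you read $Pr[l\leadsto T]$ as the prior joint $Pr^-[l^*_t=l,\ T^*=T]$, whereas the paper treats it as a link likelihood independent of $l=l^*_t$, so that the joint equals $Pr^-[l_i=l^*_t]\cdot Pr^-[l_i\leadsto T]$; consequently your $\theta$ is $(\delta-1)$ times the minimal \emph{joint}, which is a weaker (less negative) slack than the paper's $(\delta-1)\min_l Pr^-[l\leadsto T]$. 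Under the paper's semantics your argument still closes with a one-line change to the last step --- factor the joint and bound $\sum_{l\in\mathcal{L}\backslash\mathcal{L}^*}Pr^-[l=l^*_t]\,Pr^-[l\leadsto T]\geq \bigl(\min_{l}Pr^-[l\leadsto T]\bigr)\cdot(1-\delta)$, which is exactly the paper's final inequality --- so nothing structural is missing.
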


The preceding theorem establishes the upper bound on privacy leakage for any trajectory within $\mathcal{T}$ when a (perturbed) location is disclosed through a $(\epsilon, \delta)$-location privacy mechanism. The proof of the preceding theorem is provided in Appendix \ref{proof:link-LP-TP}.

In the course of a trajectory, a user may interact with an LBS across multiple timesteps, making it imperative to comprehend the aggregate impact of these interactions on user's trajectory. Understanding how to delineate upper bounds on trajectory privacy loss, especially when divulging multiple locations over distinct time intervals, is crucial. In the following, we present a theorem that elucidates the precise interplay of the parameters $\epsilon$, $\delta$, and $\theta$ when multiple location privacy preserving mechanisms are employed to release perturbed locations at different timesteps.

\begin{theorem}
    \label{thm:composition}
    Let $\mathcal{M}_t$, $t \in [1,n]$, be an ($\epsilon_t, \delta_t, \theta_t$)-trajectory private mechanism. If we define $\mathcal{M}_{[n]} = \{\mathcal{M}_1, \mathcal{M}_2, \cdots, \mathcal{M}_n\}$ to be the mechanism that uses $\mathcal{M}_t$ to release a perturbed location $z_t$ during the timestep $t$, then $\mathcal{M}_{[n]}$ is ($\epsilon_{[n]}, \delta_{[n]}, \theta_{[n]}$)-trajectory private where:
    \begin{equation}
        \begin{split}
            \epsilon_{[n]} & =  \sum_{i=1}^n \epsilon_i \\
            \delta_{[n]} & =  \prod_{i=1}^n \delta_i^{-1}
            \\
            \theta_{[n]} & =  \theta_n + \sum_{i=1}^{n-1} \theta_i \prod_{j=i+1}^{n} \delta_{j}^{-1} \cdot e^{\epsilon_{j}}
        \end{split}
        \nonumber
    \end{equation}
\end{theorem}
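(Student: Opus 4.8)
The plan is to prove this by induction on the number of timesteps $n$, leveraging the Bayesian \emph{belief-chaining} principle that underlies sequential composition: the adversary's posterior distribution over trajectories immediately after $z_t$ is published coincides with the prior it carries into the analysis of the $(t{+}1)$-st mechanism. Writing $Pr^{-}_t$ and $Pr^{+}_t$ for the prior and posterior at step $t$, this is the identity $Pr^{-}_{t+1}[\cdot] = Pr^{+}_{t}[\cdot]$, which lets the per-step guarantees of Definition~\ref{def:trajecroty_privacy} stack. The base case $n = 1$ is immediate: $\mathcal{M}_{[1]} = \mathcal{M}_1$ is $(\epsilon_1,\delta_1,\theta_1)$-trajectory private by hypothesis, which is exactly the claimed statement for $n = 1$ once the empty sum defining $\theta_{[1]}$ is dropped.

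For the inductive step, assume the composite $\mathcal{M}_{[n-1]}$ already satisfies the claimed guarantee, so that the prior carried into the $n$-th release obeys, for every $T \in \mathcal{T}$, $Pr^{-}_{n}[T = T^*] = Pr^{+}_{n-1}[T = T^*] \le \big(\prod_{i=1}^{n-1}\delta_i^{-1}\big)\, e^{\sum_{i=1}^{n-1}\epsilon_i}\, Pr^{-}_{1}[T = T^*] + \theta_{[n-1]}$. Invoking Definition~\ref{def:trajecroty_privacy} for $\mathcal{M}_n$ with this prior, publishing $z_n$ gives $Pr^{+}_n[T = T^* \mid l^* \in \mathcal{L}^*_n] \le \delta_n^{-1} e^{\epsilon_n}\, Pr^{-}_n[T = T^*] + \theta_n$; substituting the previous bound and distributing $\delta_n^{-1} e^{\epsilon_n}$ leaves the multiplicative coefficient $\big(\prod_{i=1}^{n}\delta_i^{-1}\big)\, e^{\sum_{i=1}^{n}\epsilon_i}$ in front of $Pr^{-}_1[T = T^*]$ and the additive residual $\delta_n^{-1} e^{\epsilon_n}\,\theta_{[n-1]} + \theta_n$. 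This is precisely the asserted guarantee at level $n$: reading off the components gives $\epsilon_{[n]} = \sum_{i=1}^{n}\epsilon_i$ and the accumulated $\delta$-factor $\prod_{i=1}^{n}\delta_i^{-1}$, while the one-step recursion $\theta_{[n]} = \delta_n^{-1} e^{\epsilon_n}\,\theta_{[n-1]} + \theta_n$ unrolls from $n$ down to $1$ into $\theta_{[n]} = \theta_n + \sum_{i=1}^{n-1} \theta_i \prod_{j=i+1}^{n} \delta_j^{-1} e^{\epsilon_j}$.

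The delicate step --- and the one I would spend the most care on --- is justifying the belief-chaining identity $Pr^{-}_{t+1} = Pr^{+}_{t}$ in the presence of the per-step $\delta$-obfuscation sets: $\mathcal{M}_t$ is only guaranteed trajectory private \emph{conditioned on} $l^*_t \in \mathcal{L}^*_t$, and $\mathcal{L}^*_t$ is itself determined by the running prior $Pr^{-}_t$ (hence by $z_1,\dots,z_{t-1}$ and the earlier obfuscation sets that the strong adversary is assumed to deduce). One must verify that conditioning on this entire history still leaves the hypothesis of Definition~\ref{def:trajecroty_privacy} applicable at step $t$, with $Pr^{-}_t$ equal to the previous step's posterior. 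This goes through provided the internal randomness of the mechanisms is independent across timesteps --- the same assumption underpinning the sequential composition theorem of LDP already invoked in Proposition~\ref{prop:trajectory_privacy} and Theorem~\ref{thm:location-privacy} --- so that $Pr[z_t \mid l, z_1,\dots,z_{t-1}] = Pr[z_t \mid l]$ and the Bayesian update factorizes across steps. A secondary bookkeeping point is carrying the nested conditioning events $\{l^*_i \in \mathcal{L}^*_i\}_{i \le t}$ through the chain; under the standing convention recorded in the remark following Definition~\ref{def:delta_los} (the true location, or its surrogate, always lies in the chosen $\delta$-obfuscation set), these events hold at every step and the inequalities compose without extra loss.
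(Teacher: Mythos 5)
Your proof is correct and follows essentially the same route as the paper's: induction on the number of timesteps, chaining the per-step guarantees of Definition \ref{def:trajecroty_privacy} via the identity $Pr^{-}_{t+1}[T=T^*]=Pr^{+}_{t}[T=T^*]$ and unrolling the recursion $\theta_{[n]}=\delta_n^{-1}e^{\epsilon_n}\theta_{[n-1]}+\theta_n$ into the stated closed form. Your explicit discussion of why the belief-chaining identity survives the per-step conditioning on $l^*_t\in\mathcal{L}^*_t$ and the adaptive choice of the obfuscation sets is a point the paper's proof passes over silently, and is a welcome addition rather than a deviation.
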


The importance of Theorem \ref{thm:composition} lies in its acknowledgment of the diverse parameters ($\epsilon_t, \delta_t, \theta_t$) associated with different mechanisms $\mathcal{M}_t$ utilized across a trajectory. This recognition forms the foundation for constructing a flexible framework enabling users to tailor their privacy preferences for each interaction with an LBS. By delineating how these parameters interact cumulatively, the theorem facilitates the creation of personalized privacy measures, empowering users to assess and customize their privacy loss dynamically throughout their interactions with LBSs. The proof of the previous theorem is provided in Appendix \ref{proof:composition}.

\subsection{POI-based Privacy Loss Quantification}
In addition to the revealed information about the user's current location and trajectory, interacting with location navigation services can inadvertently expose sensitive details about the visited points of interest. To illustrate, the us suppose a geographic area in which a set of POIs $\{P_1, P_2, P_3\}$ exists. Suppose that for each $P_i$, we denote by $P_i^t$ the average time an individual spends at $P$, and that $P_1^t = 30, P_2^t = 7, P_3^t = 8$. Consider a scenario where a user engages with LBS, revealing two locations $l_t$ and $l_{t+1}$ during two consecutive time steps $t_j$ and $t_{j+1}$, respectively. The time elapsed between these steps is denoted as $t_{j+1} - t_j = 34$ minutes. Both locations, $l_t$ and $l_{t+1}$, fall within the same geographic area containing a set of POIs $\mathcal{P}$. Suppose, based on traffic information, the anticipated travel time between $l_t$ and $l_{t+1}$ is approximately 7 minutes. In this scenario, an adversary, armed with access to the disclosed locations and traffic details, could reasonably deduce that the user likely visited the Point of Interest $P_1$.

In this section, we broaden the model's scope to encompass the assessment of privacy loss associated with the POIs visited by the user, in identical hypothesis, where the adversary is acquainted with both the positions used to obfuscate the real location of the user and the disclosed perturbed location. To achieve the aforementioned objective, we introduce the concept of possible points of interest, defined as the set of POIs located within the designated geographic area denoted by $\mathcal{P}$. Subsequently, we posit that each point of interest $P_i \in \mathcal{P}$ is associated with a probability of being visited by the user between two locations released in two consecutive time steps $t_i$ and $t_{i+1}$ denoted by $Pr_{t+1}[P^* = P]$, were $P^*$ is the real visited POI. These probabilities encapsulate the adversary's belief regarding the likelihood of a POI being visited by the user during the interval between two consecutive timesteps. Specifically, drawing inspiration from the aforementioned example, this probability can be simply computed as a nuanced function. It takes into account factors such as the average time spent by individuals in the POI and the time required by the user to traverse between the two disclosed locations, factoring in real-time road traffic conditions for a more precise estimation. To enhance the precision of the probability computation, more sophisticated information can be integrated. Studies such as those by He et al. \cite{he2016inferring} on behavior patterns, category-aware considerations as demonstrated by He, Li, and Liao \cite{he2017category}, and the incorporation of various contexts, as explored by Yang et al. \cite{yang2017neural}, have showcased the effectiveness of neural network-based methods in the prediction of location and visited POIs. 

At any given timestep $t$, let $l_i \in \mathcal{L}_t$ represent a possible location and $P_i \in \mathcal{P}$ denote a POI. We define $Pr^-t[l_i \leadsto P_i]$ as the probability that the user visited $P_i$ given the disclosed location $z_{t-1}$ at timestep $t-1$ and $l_i$ as the potential location of the user at timestep $t$.

In the following, we establish a privacy boundary in relation to POI. We introduce the concept of ($\epsilon$, $\delta$, $\theta$)-POI privacy, offering a means to quantify and limit privacy loss in connection with a designated set of POIs.

\begin{definition}[($\epsilon$, $\delta$, $\theta$)-POI privacy]
\label{def:POI_privacy}
    Let $\mathcal{P}$ be the set of possible POIs, $P^*$ be the POI visited by the user at timestep $t$ (between the disclosed location $z_{t-1}$ and $z_t$), $\mathcal{L}^*_t$ be the $\delta$-location obfuscation set to be used at timestep $t$, a mechanism $\mathcal{M}$ satisfies ($\epsilon, \delta, \theta$)-POI privacy if and only if, for any POI $P \in \mathcal{P}$, for any output $z_t$ representing the released location, and any adversaries knowing the true location $l^*_t$ is in $\mathcal{L}^*_t$ the following holds:
    \begin{equation}
        Pr^+_t[P = P^* | l^*_t \in \mathcal{L}^*_t] \leq \delta^{-1} \cdot e^\epsilon \cdot Pr^-_t[P = P^*]  + \theta
        \nonumber
    \end{equation}
\end{definition}

The previous definition aims to formalize the privacy guarantees provided by the mechanism $\mathcal{M}$ regarding the disclosure of the user's visited POIs. It ensures that the mechanism's output does not significantly increase the likelihood of an adversary inferring the true POI $P^*$ beyond what would be expected from prior knowledge, considering the obfuscated location and the specified privacy parameters $\epsilon$, $\delta$, and $\theta$. To ascertain the values of the aforementioned parameters, we introduce a formal theorem that establishes a direct connection between the concept of $(\epsilon, \delta)$-location privacy, where users can delineate a set of locations for conducting LDP-based obfuscation of their real locations, and the concept of $(\epsilon, \delta, \theta)$-POI privacy.

\begin{theorem}
\label{thm:link-LP-POIP}
Let $\mathcal{P}$ denote the possible POIs, $\mathcal{L}_t$ be the set of possible positions at timestep $t$, and $\mathcal{L}^*_t$ be the $\delta$-location obfuscation set to be used at timestep $t$. A mechanism $\mathcal{M}$ that satisfies $(\epsilon, \delta)$-location privacy for any output $z_t$ representing the released location and any adversaries knowing the true location $l^*_t$ is in $\mathcal{L}^*_t$, satisfies ($\epsilon, \delta, \theta$)-POI privacy for any POI $P \in \mathcal{P}$ where 
    \begin{equation}
    \theta = (\delta-1) \cdot \min\limits_{l \in \mathcal{L}\backslash \mathcal{L}^*} Pr[l \leadsto P]
    \nonumber
    \end{equation}    
\end{theorem}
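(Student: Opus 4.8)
The plan is to reuse, essentially verbatim, the argument behind Theorem~\ref{thm:link-LP-TP}: $(\epsilon,\delta,\theta)$-POI privacy has the same algebraic shape as $(\epsilon,\delta,\theta)$-trajectory privacy, with the trajectory $T$ replaced by the POI $P$ and the reachability weight $Pr[l\leadsto T]$ replaced by the visit weight $Pr[l\leadsto P]$, so the whole task is to check that the same decomposition and bounding steps go through. First I would expand the posterior POI probability by conditioning on which location in $\mathcal{L}_t$ is the true one at timestep $t$,
\[
Pr^+_t[P=P^*\mid l^*_t\in\mathcal{L}^*_t]=\sum_{l_i\in\mathcal{L}_t}Pr^+_t[P=P^*\mid l_i=l^*_t,\,l^*_t\in\mathcal{L}^*_t]\cdot Pr^+_t[l_i=l^*_t\mid l^*_t\in\mathcal{L}^*_t],
\]
and then make explicit the modeling assumption that the released location $z_t$ is a noisy function of $l^*_t$ (and fresh mechanism randomness) alone, so that, conditionally on the true location, $z_t$ is independent of the visited POI. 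Under this assumption $Pr^+_t[P=P^*\mid l_i=l^*_t,\,l^*_t\in\mathcal{L}^*_t]=Pr^-_t[P=P^*\mid l_i=l^*_t]=Pr[l_i\leadsto P]$, which collapses the posterior into a weighted average of the $Pr[l_i\leadsto P]$ against the posterior location distribution.

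Next I would apply $(\epsilon,\delta)$-location privacy to that location distribution. Equation~\ref{eq:thm1_first} forces $Pr^+_t[l_i=l^*_t\mid l^*_t\in\mathcal{L}^*_t]=0$ for every $l_i\notin\mathcal{L}^*_t$, so only the terms with $l_i\in\mathcal{L}^*_t$ survive, and Equation~\ref{eq:thm1_second} bounds each of those by $\delta^{-1}e^{\epsilon}\,Pr^-_t[l_i=l^*_t]$. This yields
\[
Pr^+_t[P=P^*\mid l^*_t\in\mathcal{L}^*_t]\le\delta^{-1}e^{\epsilon}\sum_{l_i\in\mathcal{L}^*_t}Pr[l_i\leadsto P]\cdot Pr^-_t[l_i=l^*_t].
\]
Then I would rewrite the prior POI probability, again by total probability, as $Pr^-_t[P=P^*]=\sum_{l_i\in\mathcal{L}_t}Pr[l_i\leadsto P]\,Pr^-_t[l_i=l^*_t]$, split the sum at $\mathcal{L}^*_t$, and use Definition~\ref{def:delta_los}, which gives $\sum_{l_i\in\mathcal{L}^*_t}Pr^-_t[l_i=l^*_t]=\delta$ and hence $\sum_{l_i\in\mathcal{L}\backslash\mathcal{L}^*_t}Pr^-_t[l_i=l^*_t]=1-\delta$. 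Lower-bounding every weight on the complement by $m:=\min_{l\in\mathcal{L}\backslash\mathcal{L}^*}Pr[l\leadsto P]$ gives
\[
\sum_{l_i\in\mathcal{L}^*_t}Pr[l_i\leadsto P]\,Pr^-_t[l_i=l^*_t]\;\le\;Pr^-_t[P=P^*]-(1-\delta)\,m.
\]

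Finally the claim follows from elementary algebra. Multiplying the previous display by $\delta^{-1}e^{\epsilon}$ and combining with the earlier bound gives
\[
Pr^+_t[P=P^*\mid l^*_t\in\mathcal{L}^*_t]\le\delta^{-1}e^{\epsilon}\,Pr^-_t[P=P^*]-\delta^{-1}e^{\epsilon}(1-\delta)\,m,
\]
and since $\delta\le 1$ and $\epsilon\ge 0$ we have $\delta^{-1}e^{\epsilon}\ge 1$, so the nonnegative quantity $(1-\delta)m$ is only scaled up, i.e. $-\delta^{-1}e^{\epsilon}(1-\delta)m\le-(1-\delta)m=(\delta-1)m=\theta$. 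This is exactly $(\epsilon,\delta,\theta)$-POI privacy with $\theta=(\delta-1)\min_{l\in\mathcal{L}\backslash\mathcal{L}^*}Pr[l\leadsto P]$, and since $m$ is taken per POI the bound holds for every $P\in\mathcal{P}$. The main obstacle is the conditional-independence step: one must either assume, or derive from the release model, that observing the perturbation $z_t$ conveys no information about $P^*$ once the true location $l^*_t$ is fixed; granting that, the remainder is the same bookkeeping as in the proof of Theorem~\ref{thm:link-LP-TP}.
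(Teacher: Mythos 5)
Your proof is correct and follows essentially the same route as the paper's: decompose the POI event over candidate locations, apply the $(\epsilon,\delta)$-location-privacy bound termwise on $\mathcal{L}^*_t$ (with zero posterior mass outside it), and bound the complement's contribution via $\min_{l\in\mathcal{L}\backslash\mathcal{L}^*}Pr[l\leadsto P]$ together with the fact that the complement carries prior mass $1-\delta$. The only presentational difference is that you make explicit two things the paper uses silently, namely the conditional-independence assumption (that $z_t$ reveals nothing about $P^*$ once $l^*_t$ is fixed, which the paper phrases as the independence of $l_i=l^*_t$ and $l_i\leadsto P$ and the release not affecting $Pr[l_i\leadsto P]$) and the relaxation $\delta^{-1}e^{\epsilon}\ge 1$ needed to absorb the complement term into $\theta$.
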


The Proof of the previous theorem is provided in Appendix \ref{proof:link-LP-POIP}.

\subsection{Accuracy Quantification}
In earlier sections, we present our unified framework allowing to quantify simultaneously the loss of privacy concerning user locations, trajectories, and points of interest (POI). These quantifiable losses are primarily influenced by the $\delta$-location set utilized for applying differential privacy-based obfuscation to the actual user location, along with the privacy parameter $\epsilon$. To enable the entity engaging with the LBS to balance the trade-off between privacy loss and captured accuracy, it's imperative to gauge the captured accuracy based on the aforementioned parameters of our framework, namely $\delta$ and $\epsilon$.

To fulfill the previous objective, we provide through the following theorem a formal link between the notion of $(\epsilon,\delta)$-location privacy and the error bound introduced by the mechanisms enforcing $(\epsilon,\delta)$-location privacy when releasing a perturbed position. We note that the error is defined as the distance between the real position and the released position (Section \ref{sec:utility_error}).

\begin{theorem}
\label{thm:utility}
    Given a function d implementing the euclidean metric. Let $\mathcal{L}$ be the possible position of the user and $\mathcal{L}^*$ be the $\delta$-location obfuscation to be used to obfuscate the real location $l^*$ of the user. A mechanism $\mathcal{M}$ ensuring $(\epsilon, \delta)$-location privacy on $\mathcal{L}$ introduces an error $\mathcal{E}$ where
    \begin{equation}
        \label{eq:error}
        \mathcal{E} \geq \Omega \left(\frac{1}{\epsilon}\sqrt{Area(CH(\Delta d[\mathcal{L}^* \cup \{l^*\}]))}\right)
    \end{equation}
    where $CH(\Delta d[\mathcal{L}^* \cup \{l^*\}])$ is the sensitivity hull of $d$ with respect to the set of locations $\mathcal{L}^* \cup \{l^*\}$ and Area is the area of the aforementioned sensitivity hull.
\end{theorem}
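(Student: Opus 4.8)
The plan is to reduce \eqref{eq:error} to the classical volume‑based lower bound of Hardt and Talwar for differentially private mechanisms, specialized to the two‑dimensional identity query, and to invoke it through the $\delta$‑obfuscation‑set reformulation. The first step strips the $\delta$ out of the hypothesis. Conditioning on the standing assumption $l^*_t \in \mathcal{L}^*_t$ and using that the prior mass of $\mathcal{L}^*_t$ equals $\delta$ (Definition~\ref{def:delta_los}), so that $Pr^-[l_i = l^*_t] = \delta \cdot Pr^-[l_i = l^*_t \mid l^*_t \in \mathcal{L}^*_t]$ for $l_i \in \mathcal{L}^*_t$, inequality~\eqref{eq:thm1_second} (the nontrivial case of $(\epsilon,\delta)$‑location privacy) collapses to $Pr^+[l_i = l^*_t \mid l^*_t \in \mathcal{L}^*_t] \leq e^\epsilon \cdot Pr^-[l_i = l^*_t \mid l^*_t \in \mathcal{L}^*_t]$; that is, $\mathcal{M}$ restricted to inputs in $\mathcal{L}^*_t$ is $\epsilon$‑ALDP on $\mathcal{L}^*_t$, hence $\epsilon$‑LDP on $\mathcal{L}^*_t$ by the PTLM equivalence. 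Under the same assumption $\mathcal{L}^*_t \cup \{l^*\} = \mathcal{L}^*_t$, and the relevant query is the identity map on locations (the released $z_t$ is an estimate of $l^*$), whose error is the Euclidean distance $d(l^*, z_t)$ and whose sensitivity object is precisely the centrally symmetric convex polygon $\mathcal{K} := CH(\Delta d[\mathcal{L}^* \cup \{l^*\}]) \subseteq \mathbb{R}^2$.

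Second, I would apply the geometric lower bound: for a query in $\mathbb{R}^m$ with sensitivity polytope $\mathcal{K}$, every $\epsilon$‑differentially private mechanism incurs worst‑case expected $\ell_2$ error $\Omega\!\left(\tfrac{1}{\epsilon}(\mathrm{Vol}_m(\mathcal{K}))^{1/m}\right)$, where attention is restricted to the affine hull of $\mathcal{L}^* \cup \{l^*\}$ (if that hull is one‑dimensional, $\mathrm{Area}(\mathcal{K}) = 0$ and the bound is vacuous). For $m = 2$ this is exactly $\Omega\!\left(\tfrac{1}{\epsilon}\sqrt{\mathrm{Area}(\mathcal{K})}\right)$, which is \eqref{eq:error}. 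To keep the argument self‑contained I would reproduce the packing proof of this lemma: fix a scale $t = \Theta(1/\epsilon)$ so that any two points of $t\mathcal{K}$ are within $\mathcal{K}$‑distance $2t$ and hence only $e^{\Theta(1)}$‑indistinguishable by group privacy; take a maximal $2r$‑separated subset $\{a_1, \dots, a_N\}$ of $t\mathcal{K}$, for which $N \geq \mathrm{Vol}(t\mathcal{K}) / \big((2r)^m \mathrm{Vol}(B_2^m)\big)$ by the standard covering bound, with the balls $B(a_i, r)$ pairwise disjoint; each $a_i$ differs from a fixed reference point of $\mathcal{L}^* \cup \{l^*\}$ by a vector in $t\mathcal{K}$, i.e.\ by a $t$‑scaled convex combination of generators $l-l'$, so it is a feasible true location reachable by mixing the $\delta$‑set locations; an averaging argument over $\sum_i Pr[\mathcal{M}(a_i) \in B(a_i, r)] \leq e^{\Theta(1)}$ then forces some $a_i$ with $Pr[\mathcal{M}(a_i) \notin B(a_i, r)] \geq \tfrac12$ once $N \geq 2 e^{\Theta(1)}$. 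Choosing $r = \Theta\!\left(\tfrac{1}{\epsilon}\sqrt{\mathrm{Area}(\mathcal{K})}\right)$ makes the constraints on $t$, $N$, and disjointness simultaneously satisfiable, whence $\mathcal{E} \geq r/2 = \Omega\!\left(\tfrac{1}{\epsilon}\sqrt{\mathrm{Area}(\mathcal{K})}\right)$, with the shape‑dependent (isotropic) constant of $\mathcal{K}$ absorbed into the $\Omega$.

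The hard part will be the bridge between the abstract packing and the $\delta$‑obfuscation‑set model: I must argue that a rich enough family of feasible ``true locations'' — the scaled body $t\mathcal{K}$, equivalently mixtures over $\mathcal{L}^* \cup \{l^*\}$ — is available so the packing is non‑degenerate, and that the $\epsilon$‑LDP guarantee on $\mathcal{L}^*$ extends with only an $O(1)$ group‑privacy loss to $e^{\Theta(1)}$‑indistinguishability across the $a_i$; it is precisely keeping this group‑privacy exponent a constant (rather than letting it grow with $\epsilon$) that produces the $\tfrac1\epsilon$ factor instead of a weaker $e^{-\Theta(\epsilon)}$ one. Secondary matters to pin down are the exact meaning of $\mathcal{E}$ (worst case, over admissible true locations and adversary priors, of the expected released‑location error) and the degenerate collinear case noted above. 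A lighter alternative, valid under the standing assumption $l^* \in \mathcal{L}^*$ so that $\mathcal{L}^* \cup \{l^*\} = \mathcal{L}^*$, is to observe that the statement then coincides with the error lower bound already established for the Planar Isotropic Mechanism in \cite{xiao2015protecting} and simply cite it.
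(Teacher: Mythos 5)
Your proposal is correct and follows essentially the same route as the paper: reduce $(\epsilon,\delta)$-location privacy to $\epsilon$-LDP on the $\delta$-obfuscation set $\mathcal{L}^*$ (with $l^*\in\mathcal{L}^*$ so that $\mathcal{L}^*\cup\{l^*\}=\mathcal{L}^*$), and then invoke the error lower bound $\frac{1}{\epsilon}\sqrt{\mathrm{Area}(CH(\Delta d[\mathcal{L}^*]))}$ of Theorem~4.3 in \cite{xiao2015protecting} --- exactly your ``lighter alternative.'' The only difference is that you additionally unpack the Hardt--Talwar-style packing proof behind that cited theorem (and, in doing so, derive the needed direction of the equivalence between $(\epsilon,\delta)$-location privacy and $\epsilon$-LDP on $\mathcal{L}^*$ directly from the definition rather than from Theorem~\ref{thm:location-privacy}, which formally states only the converse implication); the paper leaves both of these to citations.
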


\begin{proof}
Since $\mathcal{M}$ satisfies $(\epsilon, \delta)$-location privacy on $\mathcal{L}$, then based on Theorem \ref{thm:location-privacy}, it satisfies $\epsilon$-LDP on $\mathcal{L}^*$. In addition, it has been shown in \cite{xiao2015protecting} (Theorem 4.3) that a mechanism that satisfies $\epsilon$-LDP on a $\delta$-location obfuscation set $\mathcal{L}^*$ has a lower bound error of $\frac{1}{\epsilon} \sqrt{\text{Area}(CH(\Delta d[\mathcal{L}^*]))}$ while supposing that $l^* \in \mathcal{L}^*$, which concludes the proof.
\end{proof}

\section{Achieving $(\epsilon,\delta)$-Location Privacy}
\label{sec:instantiation}
In the previous section, we introduced a comprehensive framework designed to quantify privacy loss concerning locations, trajectories, and Points of Interest (POIs), while also assessing the accuracy of a differential privacy-based mechanism. To bridge theory with practical application, this section aims to instantiate the aforementioned mechanism within our framework, demonstrating its effectiveness in achieving the defined privacy notions.

The following theorem states that releasing perturbed positions using Planar Isotropic Mechanism \cite{xiao2015protecting} satisfies $(\epsilon,\delta)$-location privacy, $(\epsilon,\delta,\theta_t)$-trajectory privacy, and $(\epsilon,\delta,\theta_p)$-POI privacy while simultaneously achieving optimal accuracy.

\begin{theorem}
    Let $\mathcal{T}$ be the set of possible trajectories, $T^*$ denote the true trajectory of the user, $\mathcal{L}_t$ denote the set of possible position at timestep $t$, $\mathcal{L}^*_t$ be the $\delta$-location obfuscation set to be used at timestep $t$, $\mathcal{P}$ denote the set of possible POIs, $P^*$ be the POI visited by the user at timestep $t$. The Planar Isotropic Mechanism satisfies the following:
    \begin{enumerate}
        \item \label{thm:stat1} It ensures $(\epsilon,\delta)$-location privacy on $\mathcal{L}_t$, $(\epsilon,\delta,\theta_t)$-trajectory privacy on $\mathcal{T}$, and $(\epsilon,\delta,\theta_p)$-POI privacy on $\mathcal{P}$ where
        \begin{equation}
        \begin{split}
        \nonumber
            \theta_t &= (\delta - 1) \cdot \min\limits_{l \in \mathcal{L}\backslash \mathcal{L}^*} Pr[l \leadsto T] \\
            \theta_p &= (\delta - 1) \cdot \min\limits_{l \in \mathcal{L}\backslash \mathcal{L}^*} Pr[l \leadsto P] \\
        \end{split}
        \end{equation}
        \item \label{thm:stat2} It provides optimal accuracy with respect to used $\delta$-location obfuscation set $\mathcal{L}^*_t$.
    \end{enumerate}
\end{theorem}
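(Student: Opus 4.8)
The plan is to show that the Planar Isotropic Mechanism (PIM) of \cite{xiao2015protecting} fits as an instantiation of our framework, deriving all three privacy guarantees from a single property: that PIM, when applied over a $\delta$-obfuscation set $\mathcal{L}^*_t$, satisfies $\epsilon$-LDP on $\mathcal{L}^*_t$ in the sense of Definition \ref{def:DP_los}. First I would recall that PIM is designed precisely to release a perturbed location $z_t$ that is $\epsilon$-LDP-indistinguishable among the locations in a given obfuscation set (in \cite{xiao2015protecting} this is the $\delta$-location set); hence $\epsilon$-LDP on $\mathcal{L}^*_t$ holds by construction. Given that, statement \ref{thm:stat1} follows by chaining the theorems already established: Theorem \ref{thm:location-privacy} upgrades $\epsilon$-LDP on $\mathcal{L}^*_t$ to $(\epsilon,\delta)$-location privacy on $\mathcal{L}_t$; Theorem \ref{thm:link-LP-TP} then yields $(\epsilon,\delta,\theta_t)$-trajectory privacy on $\mathcal{T}$ with $\theta_t = (\delta-1)\cdot\min_{l\in\mathcal{L}\backslash\mathcal{L}^*}Pr[l\leadsto T]$; and Theorem \ref{thm:link-LP-POIP} yields $(\epsilon,\delta,\theta_p)$-POI privacy on $\mathcal{P}$ with the analogous $\theta_p$. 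So the first half is essentially a citation-plus-composition argument, with the only real content being the verification that PIM's guarantee matches the hypothesis of Theorem \ref{thm:location-privacy} (i.e. that its indistinguishability set can be taken to be exactly our $\delta$-obfuscation set, invoking the surrogate-location remark when $l^*_t \notin \mathcal{L}^*_t$).

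For statement \ref{thm:stat2}, I would appeal to the optimality result for PIM proved in \cite{xiao2015protecting} (their Theorem 4.3 and the accompanying lower bound): PIM attains expected error matching, up to constants, the lower bound $\frac{1}{\epsilon}\sqrt{\mathrm{Area}(CH(\Delta d[\mathcal{L}^*_t]))}$ for any $\epsilon$-LDP mechanism over $\mathcal{L}^*_t$. Combining this with our Theorem \ref{thm:utility}, which establishes exactly this quantity as a lower bound on the error of \emph{any} $(\epsilon,\delta)$-location private mechanism, shows that PIM's error meets the framework's lower bound — hence it is optimal with respect to the chosen $\delta$-obfuscation set $\mathcal{L}^*_t$. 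The argument is: lower bound (Theorem \ref{thm:utility}) $\le$ error of PIM $\le$ (constant) $\times$ lower bound, the second inequality being \cite{xiao2015protecting}'s upper bound on PIM's error, so PIM is order-optimal.

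The main obstacle I anticipate is not any single inequality but the bookkeeping around the obfuscation set: \cite{xiao2015protecting} frames PIM relative to their $\delta$-location set, while our framework uses the $\delta$-obfuscation set of Definition \ref{def:delta_los}, and one must check these coincide (or that PIM's guarantee transfers verbatim) and that the "$l^*_t \in \mathcal{L}^*_t$" assumption threaded through Theorems \ref{thm:location-privacy}, \ref{thm:link-LP-TP}, \ref{thm:link-LP-POIP} is legitimately discharged via the surrogate-location device for PIM. A secondary subtlety is making precise in what sense optimality is claimed: it is optimality \emph{conditioned on} the particular $\mathcal{L}^*_t$ the user selects (the theorem statement already says "with respect to used $\delta$-location obfuscation set"), not a global optimum over all possible obfuscation strategies — so I would state the claim carefully to avoid overreach, and note that the trade-off between $\delta$ (privacy) and $\mathrm{Area}(CH(\Delta d[\mathcal{L}^*_t]))$ (utility) is exactly what the framework exposes.
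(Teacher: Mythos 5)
Your proposal follows essentially the same route as the paper's own proof: Statement~1 is obtained by invoking the $\epsilon$-LDP guarantee of the Planar Isotropic Mechanism on the $\delta$-obfuscation set (Theorem~5.1 of \cite{xiao2015protecting}) and chaining Theorems \ref{thm:location-privacy}, \ref{thm:link-LP-TP}, and \ref{thm:link-LP-POIP}, while Statement~2 is obtained from the mechanism's known optimality against the error lower bound of Theorem \ref{thm:utility}. Your version is, if anything, slightly more explicit than the paper's on the optimality sandwich and on discharging the $l^*_t \in \mathcal{L}^*_t$ assumption via the surrogate-location remark, but the argument is the same.
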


\begin{proof}
    We begin by establishing the validity of Statement \ref{thm:stat1}. According to Xiao et al. \cite{xiao2015protecting} (Theorem 5.1), the Planar Isotropic Mechanism ensures $\epsilon$-LDP within the $\delta$-location obfuscation set $\mathcal{L}^*$. Leveraging this result alongside Theorems \ref{thm:location-privacy}, \ref{thm:link-LP-TP}, and \ref{thm:link-LP-POIP}, we can demonstrate that the Planar Isotropic Mechanism guarantees $(\epsilon,\delta)$-location privacy on $\mathcal{L}$, $(\epsilon,\delta,\theta_t)$-trajectory privacy on $\mathcal{T}$ (where $\theta_t = (\delta-1) \cdot \min\limits_{l \in \mathcal{L}\backslash \mathcal{L}^*} Pr[l \leadsto T]$), and $(\epsilon,\delta,\theta_p)$-POI privacy on $\mathcal{P}$ (with $\theta_p = (\delta-1) \cdot \min\limits_{l \in \mathcal{L}\backslash \mathcal{L}^*} Pr[l \leadsto P]$), respectively. Thus, this series of deductions establishes the assertion made in \ref{thm:stat1}.
    Second, the proof of Statement \ref{thm:stat2} follows from the fact that the Planar Isotopic Mechanism is $(\epsilon,\delta)$-location private, which proved to achieve the lower bound in Theorem \ref{thm:utility}.
\end{proof}

\section{Experimental Evaluation}
\label{sec:evaluation}
In this section we present experimental evaluation of our framework. All algorithms were implemented using Python, and all experiments were conducted on an Ubuntu 22.04 machine equipped with a 3.5 GHz Intel i7 CPU and 16 GB of memory.
\subsection{Dataset}
We perform experiments using three real-world datasets sourced from public Foursquare check-in data \cite{foursquare_dataset} spanning from April 2012 to January 2014. Our focus is on three cities — New York City (NYC), Istanbul (IST), and Tokyo (TKY)— chosen for their extensive mobility records. We apply a filter to exclude locations visited fewer than 10 times in each city, then proceed to extract mobility records for individual users.
\paragraph{Trajectory data} To refine the trajectory dataset, we meticulously structure the user's mobility records into coherent trajectories. Within these trajectories, we ensure that the time interval between two consecutive points remains constrained, with a maximum gap of 1 hour.
\paragraph{POI data} To identify POIs, we begin by selecting candidate points based on their proximity to each other within a specified distance threshold. We then evaluate these candidates over time, forming clusters of significant stays. A stay is deemed significant if it lasts for a minimum duration and has a minimum number of points within its vicinity. By iteratively refining the candidate selection and clustering process, we effectively capture meaningful spatial locations, which are returned as POIs. For the details of the implementation of the aforementioned approach, please refer to Algorithm \ref{alg:poi_extraction} given Appendix \ref{app:ident_poi}. In our experiments, a location qualifies as a POI if it meets the following conditions: at least two points within the dataset have an elapsed time of 45 minutes or more, and the maximum distance between these points is 250 meters.

\subsection{Location Prediction Model}
To model the adversary's belief regarding the user's location, we leverage the Flashback deep learning method \cite{yang2020location} \footnote{We use the implementation available on \url{https://github.com/eXascaleInfolab/Flashback_code}} to accurately forecast the user's whereabouts at a specific timestamp based on observed previous locations. We employ Flashback, a well-established technique, which we adapt to our needs by enhancing its functionality. Unlike the standard implementation, our modified version of Flashback doesn't solely output the locations with the highest scores. Instead, it predicts the 100 most probable locations for the user at the given time step. Utilizing the same methodology, we build a model for predicting visited Points of Interest (POIs) based on the time intervals between two consecutive location updates. This model capitalizes on the compilation of previously extracted POIs, coupled with the average duration users spend at these locations.

\subsection{Conducted Evaluations}
We aim to assess the influence of $\delta$ and $\epsilon$ values on location-based privacy loss, trajectory-based privacy loss, and Point of Interest (POI) privacy loss. 

\begin{figure}[h]
    \centering
    \begin{subfigure}{0.23\textwidth}
        \includegraphics[width=\linewidth]{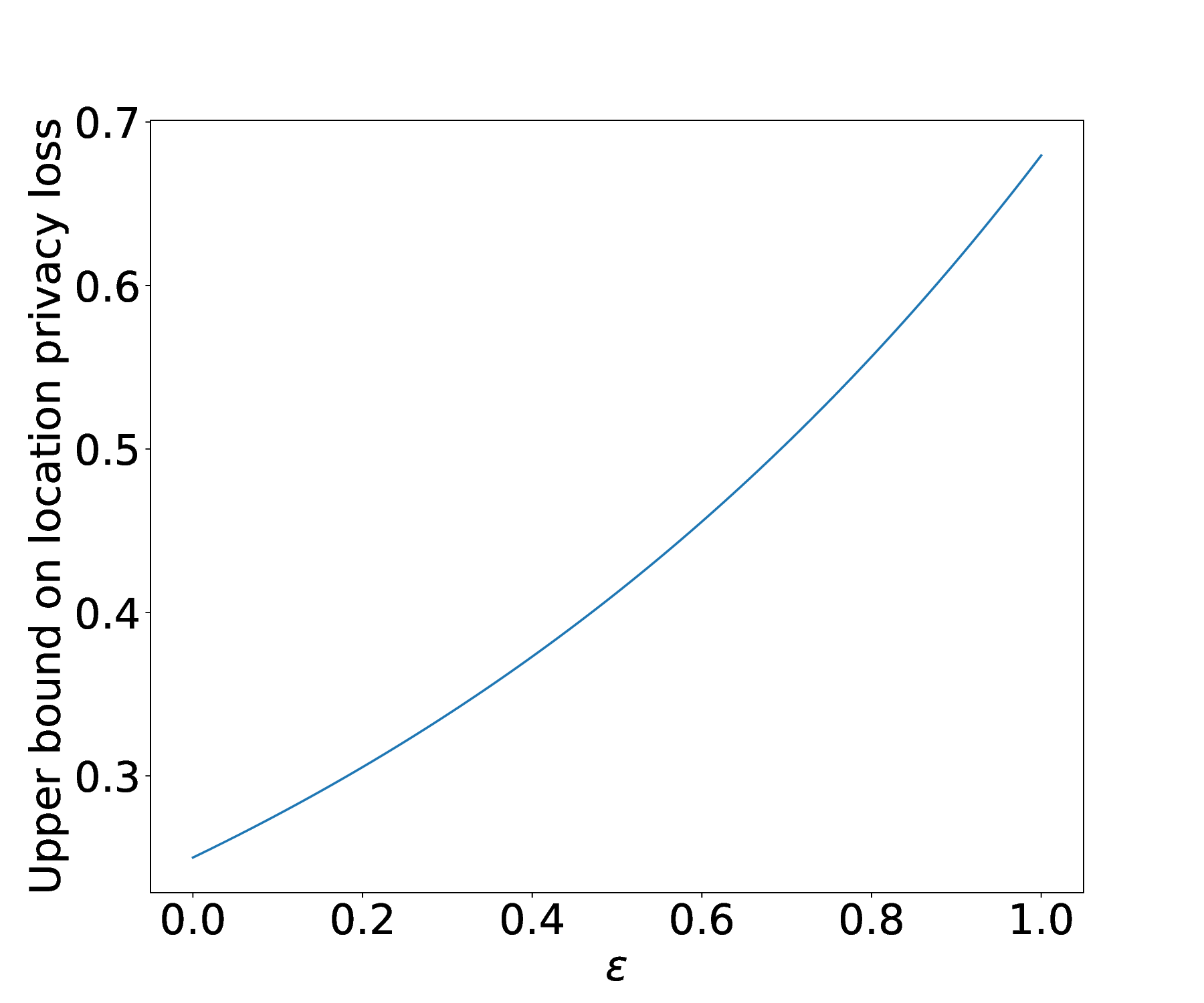}
        \caption{Upper bound as a function of $\epsilon$ ($\delta = 0.8$)}
        \label{fig:epsilon_location_}
    \end{subfigure}
    \hfill
    \begin{subfigure}{0.23\textwidth}
        \includegraphics[width=\linewidth]{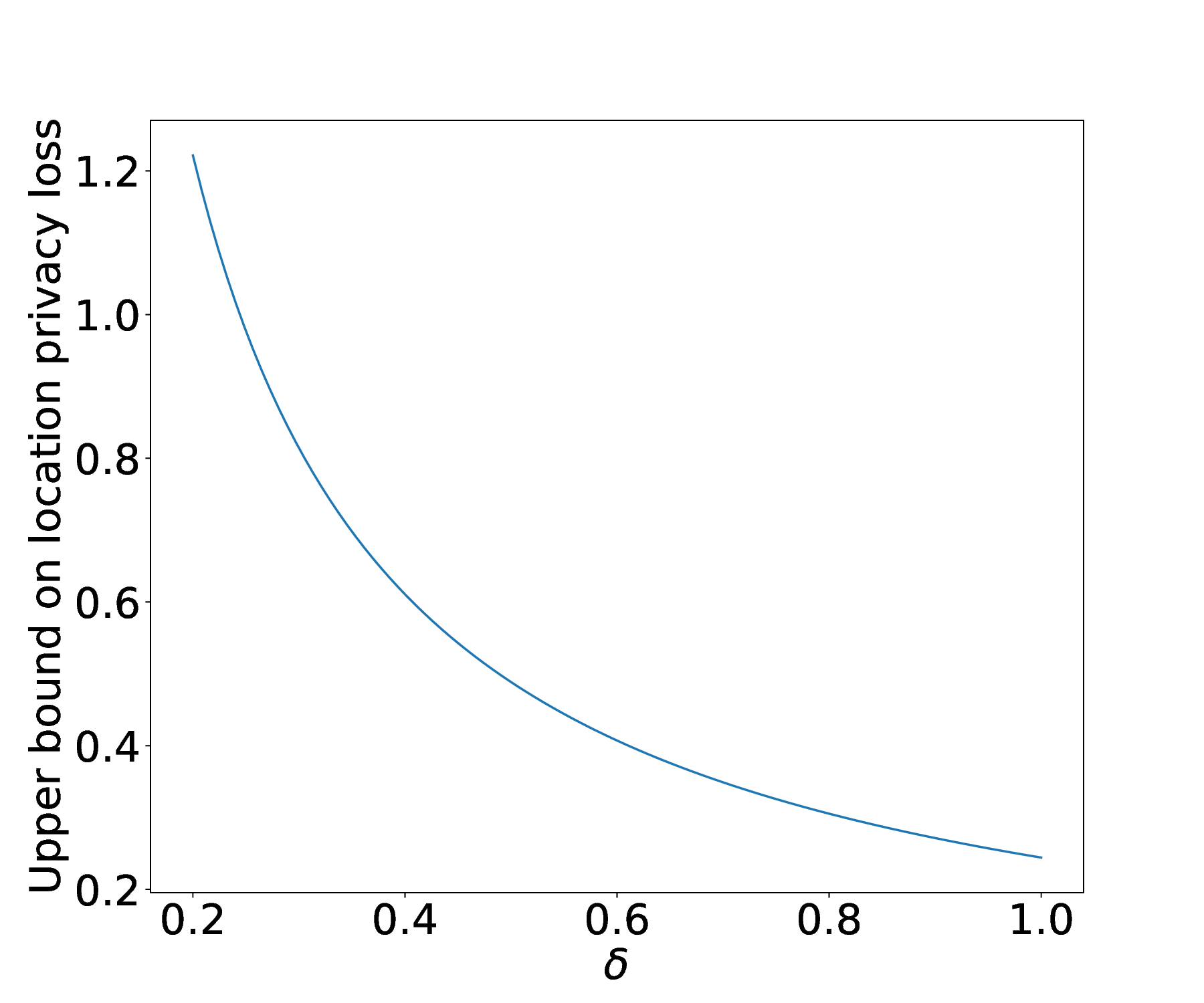}
        \caption{Upper bound as a function of $\delta$ ($\epsilon = 0.2$)}
        \label{fig:delta_location_}
    \end{subfigure}
    \caption{Upper bound on location privacy loss}
    \label{fig:epsilon_delta_location}
\end{figure}

Figure \ref{fig:epsilon_delta_location} illustrates the upper bound on location privacy loss as a function of $\epsilon$ and $\delta$. In Figure \ref{fig:epsilon_location_}, which shows the upper bound as a function of $\epsilon$ with $\delta$ fixed at 0.8, we observe a clear sub-exponential increase in the upper bound as $\epsilon$ grows. Conversely, Figure \ref{fig:delta_location_} depicts the upper bound as a function of $\delta$ with $\epsilon$ held constant at 0.2. Here, we observe a gradual decrease in the upper bound as $\delta$ increases, suggesting that larger values of $\delta$ lead to looser bounds on location privacy loss. 

In addition, we investigate the impact of the parameters $\epsilon$ and $\delta$ on the privacy of trajectory data. We conduct our analysis by randomly selecting a trajectory comprising 12 released locations. Subsequently, each released location $z_i$ undergoes perturbation using the Planar Isotropic Mechanism with a fixed $\delta$ value of 0.8. The values of $\epsilon_i$ and the set of $\delta$ obfuscation, denoted as $\mathcal{L}^*_i$, employed at each time step are determined such that the introduced error remains within a threshold of 1000 meters, as defined by Formula \ref{eq:error}.

\begin{figure}[h]
    \centering
    \begin{subfigure}{0.22\textwidth}
        \includegraphics[width=\linewidth]{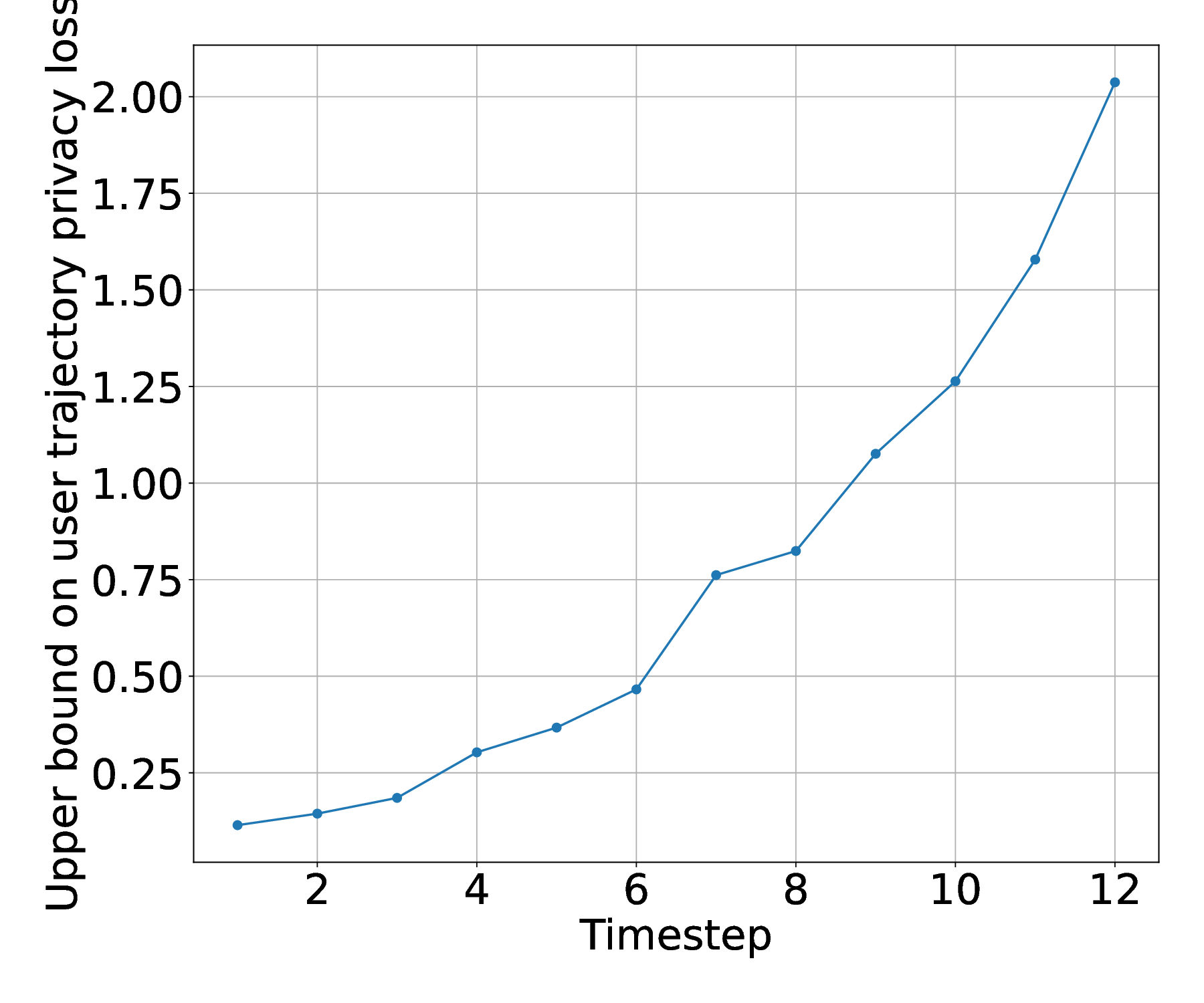}
        \caption{Upper bound on user's trajectory privacy loss}
        \label{fig:delta_location}
    \end{subfigure}
    \hfill
    \begin{subfigure}{0.22\textwidth}
        \includegraphics[width=\linewidth]{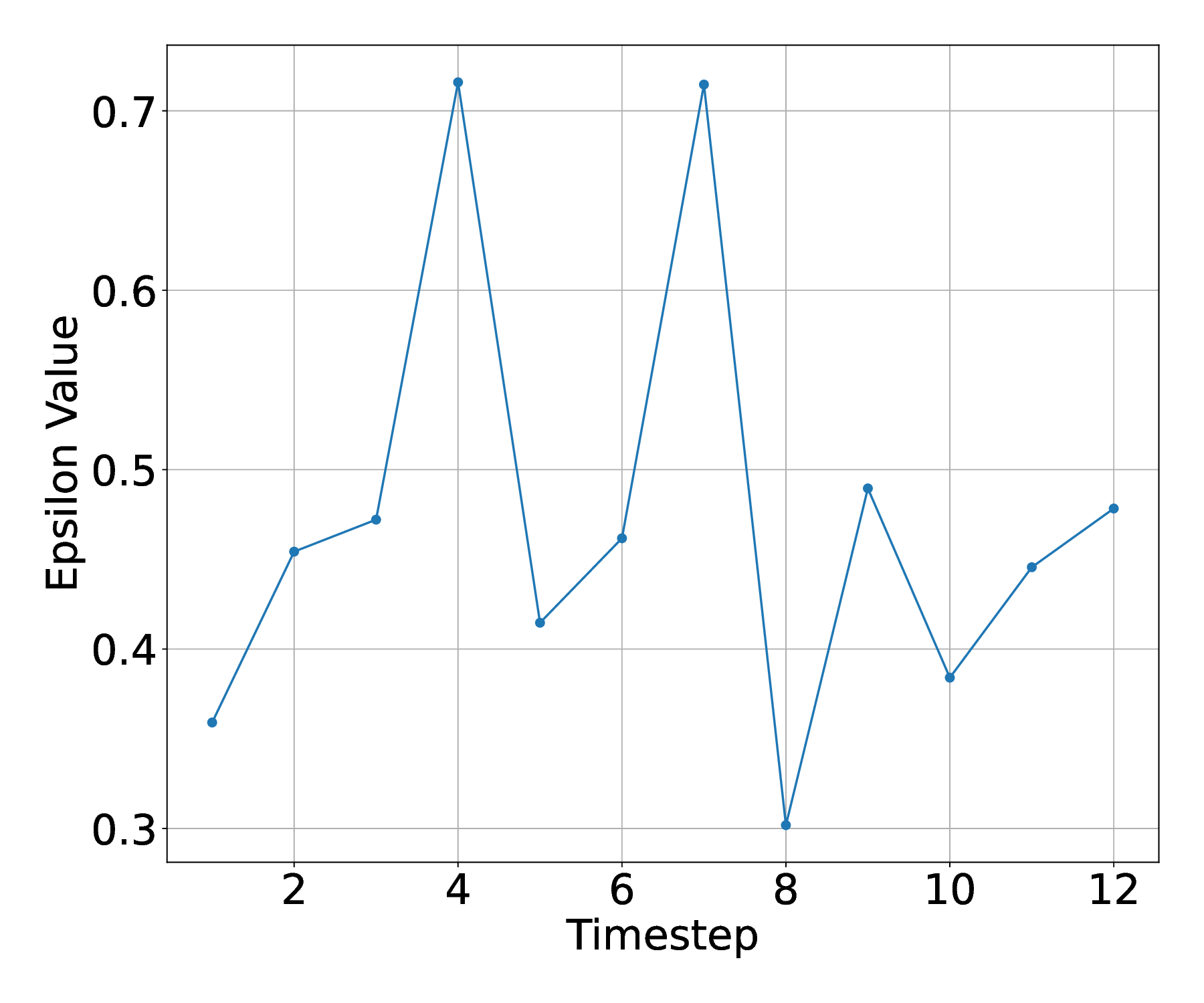}
        \caption{The value of computed $\epsilon_i$ ($\mathcal{E} \leq 1000$)}
        \label{fig:epsilon_location}
    \end{subfigure}
    \hfill
    \begin{subfigure}{0.24\textwidth}
        \includegraphics[width=\linewidth]{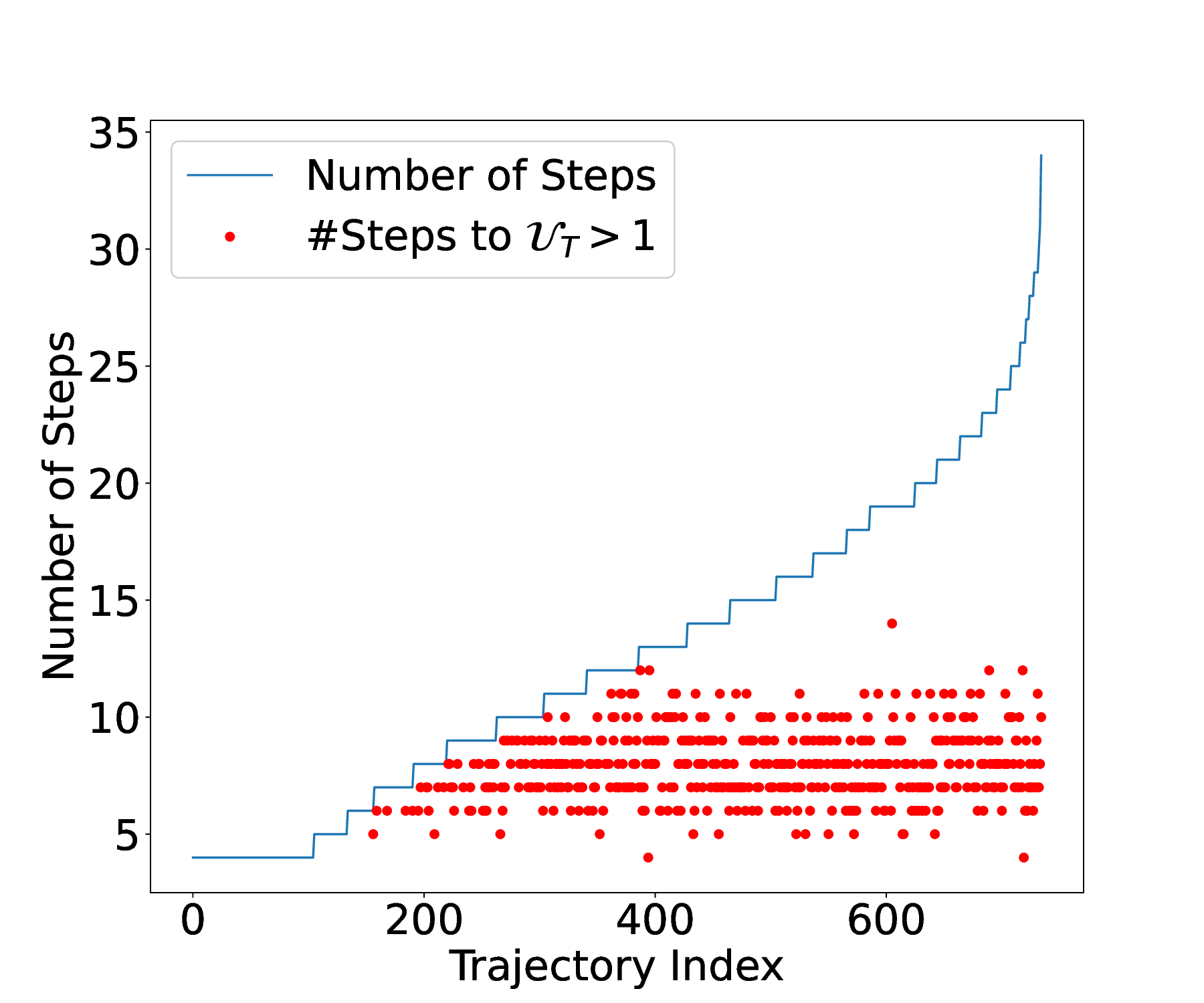}
        \caption{Number of steps required for $\mathcal{U}_T > 1$}
        \label{fig:upper_trajectory}
    \end{subfigure}
    \caption{Trajectory privacy loss quantification}
    \label{fig:trajectory}
\end{figure}

The results reported in Figure \ref{fig:trajectory} illustrates the privacy implications on user's trajectory identification when releasing location data over a series of time steps. In Figure \ref{fig:delta_location}, we observe the upper bound on the user trajectory privacy loss increasing steadily with each timestep. Notably, around timestep 9, the privacy loss surpasses a value of 1. This indicates that the adversary's ability to infer the user's trajectory significantly escalates, potentially enabling full identification of the user's trajectory. Figure \ref{fig:epsilon_location}, shows the variability in the chosen $\epsilon$ values across different time steps in order to fulfill the utility requirement $\mathcal{E} < 1$ km. Finally, Figure \ref{fig:upper_trajectory} reports for each trajectory, the number of perturbed released locations required so that the upper bound probability of identifying the trajectory, denoted as $\mathcal{U}_T$ exceeds 1. In this experiment, we use $\delta = 0.8$ and for each timestep, we compute $\epsilon$ such that $\mathcal{E} < 1$ km.

Furthermore, we leverage our proposed metric for quantifying POI-based privacy loss to evaluate the degree of privacy compromise alongside the associated accuracy under varying values of $\delta$ and $\epsilon$.

\begin{figure}[h]
    \centering
    \begin{subfigure}{0.23\textwidth}
        \includegraphics[width=\linewidth]{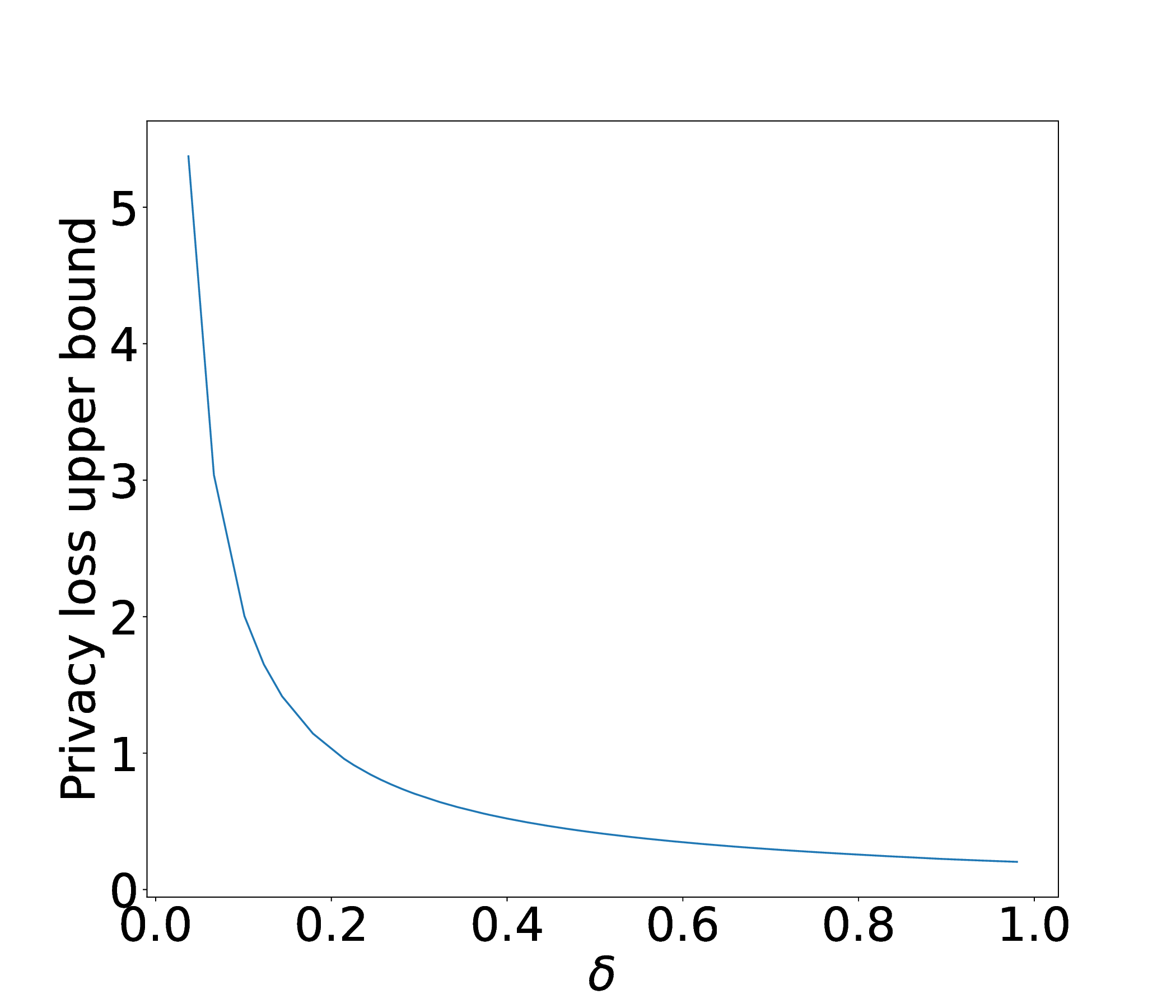}
        \caption{Upper Bound on privacy loss on a visited POI}
        \label{fig:delta_POI}
    \end{subfigure}
    \hfill
    \begin{subfigure}{0.23\textwidth}
        \includegraphics[width=\linewidth]{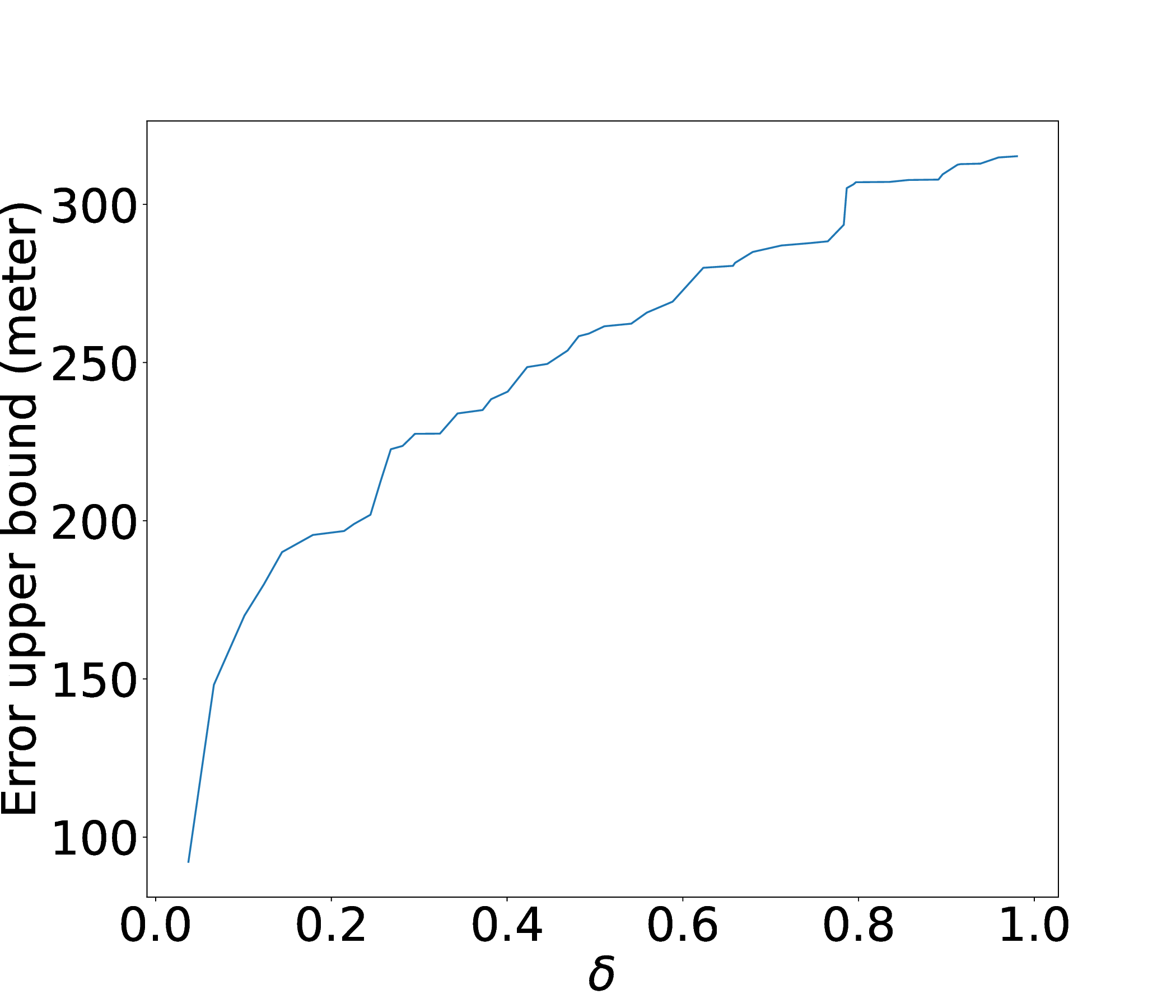}
        \caption{Upper bound on the introduced Error}
        \label{fig:error_delta_POI}
    \end{subfigure}
    \hfill
    \begin{subfigure}{0.23\textwidth}
        \includegraphics[width=\linewidth]{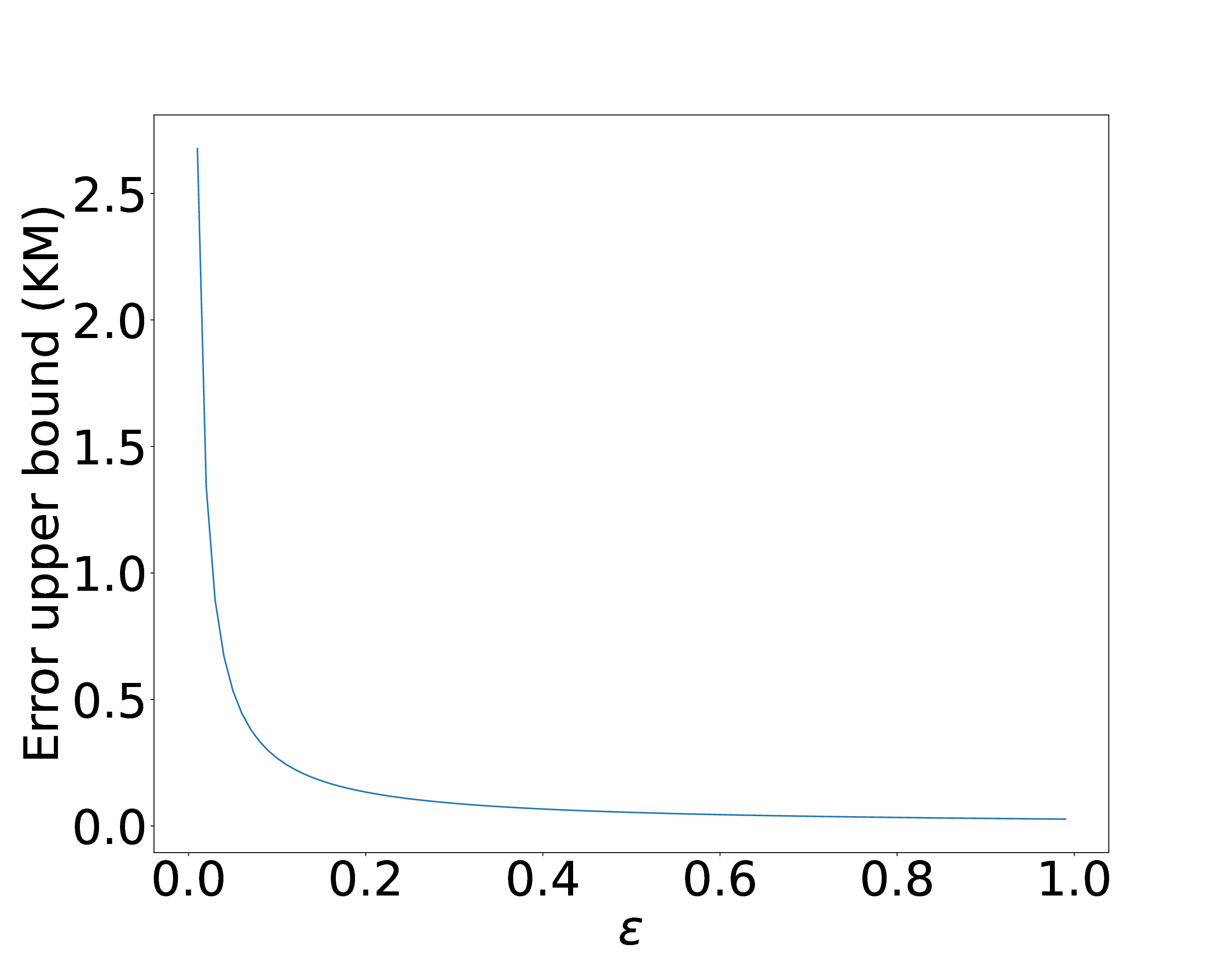}
        \caption{Upper bound on the introduced Error}
        \label{fig:epsilon_POI}
    \end{subfigure}
    \begin{subfigure}{0.23\textwidth}
        \includegraphics[width=\linewidth]{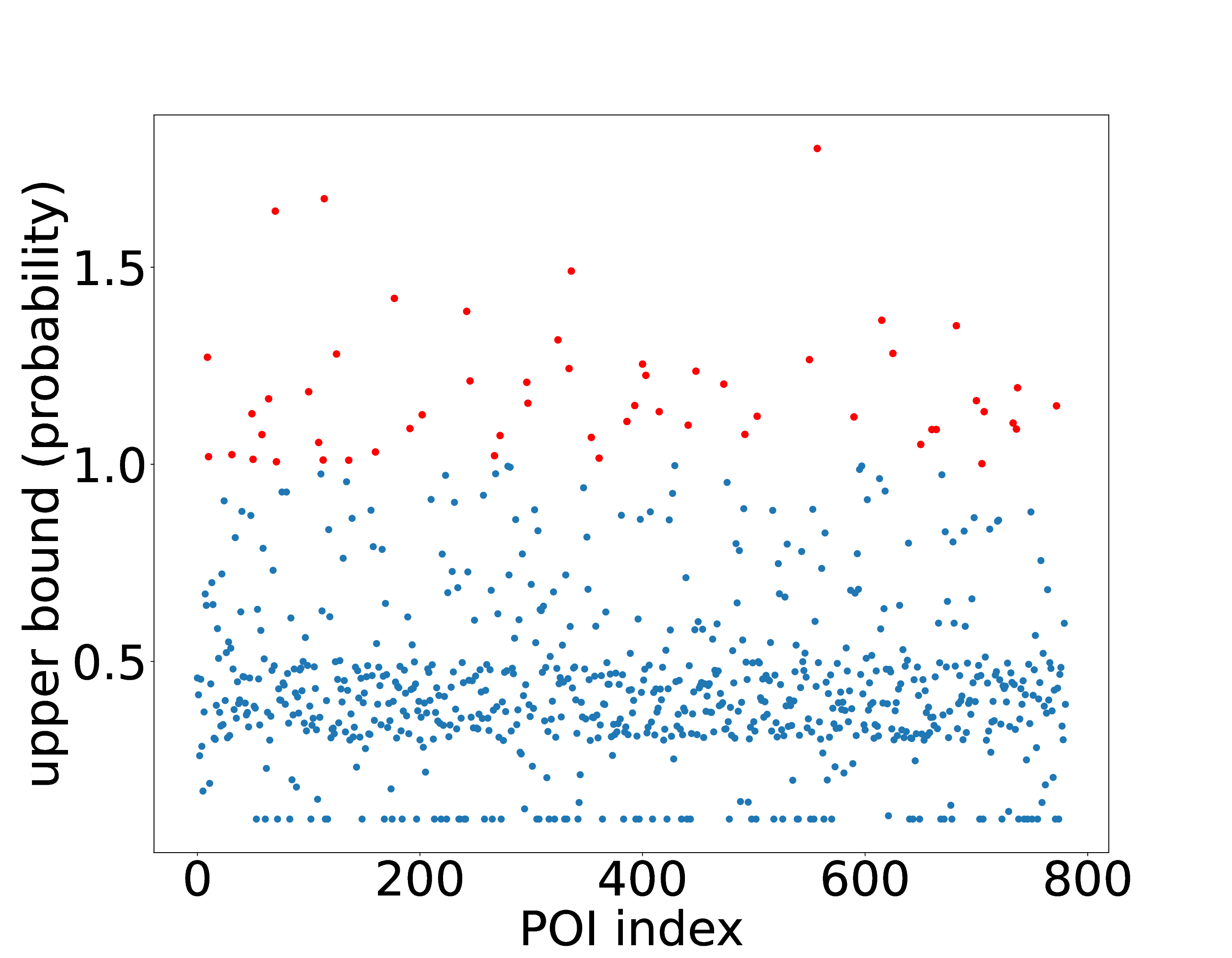}
        \caption{Upper bound on privacy loss for all visited POIs}
        \label{fig:pois_privacy_loss}
    \end{subfigure}
    \caption{POI privacy loss quantification}
    \label{fig:POI}
\end{figure}

We first investigate the impact of varying the parameter $\delta$ on both the upper bound of privacy loss and the upper bound of accuracy for a randomly selected Point of Interest (POI), as elucidated by Theorems \ref{thm:link-LP-POIP} and \ref{thm:utility} respectively. Figure \ref{fig:delta_POI} depicts the relationship between the upper bound of privacy loss on the selected POI and the values of $\delta$ (the value of $\epsilon$ is fixed at 0.1). The observed trend reveals that as the value of $\delta$ decreases, the corresponding upper bound of privacy loss on the selected POI increases exponentially. The findings underscore the paramount importance of meticulously selecting the $\delta$-location set utilized to obfuscate the actual position of the user, particularly when evaluating the resulting privacy loss on the visited POIs. In Figure \ref{fig:error_delta_POI}, we illustrate the relationship between $\delta$ and the accuracy, represented by the upper bound error introduced when interacting with Location-Based Services (LBS). A noticeable trend emerges as the value of $\delta$ decreases: the error bound decreases. This phenomenon primarily stems from the diminishing area of the convex hull (CH) encapsulating $\Delta d[\mathcal{L}^* \cup {l^*}]$, as fewer locations are considered. A decrease in the area of the convex hull indicates a more concentrated clustering of potential locations, resulting in the perturbed location released being closer to the user's true location. 

Then, we delve into exploring the impact of varying $\epsilon$ on the upper bound of introduced error while keeping the value of $\delta$ fixed at 0.7. Figure \ref{fig:epsilon_POI} illustrates that as the value of $\epsilon$ increases, the introduced error decreases exponentially. This observation underscores the critical significance of $\epsilon$ in influencing the accuracy of LBS queries. 

We conclude our analysis by quantifying the upper bound of identifying all visited Points of Interest (POIs) within the dataset when disclosing location information following each visited POI, utilizing the Planar Isotropic Mechanism with $\epsilon$ and $\delta$ chosen to ensure an error upper bound of 1 km. Figure 5d presents the upper bound of identifying each visited POI. Out of the 781 considered POIs, the upper bound identification probability exceeds 1 for 44 POIs. This indicates that, according to the considered inference model, the adversary may be capable of fully identifying these 44 visited POIs when enforcing an error upper bound of 1 km.

\section{Conclusion}
\label{sec:conclusion}
In conclusion, this paper presents a comprehensive framework for addressing privacy concerns in LBS by introducing several novel contributions. Through the extension of the $\delta$-location set concept, we have developed refined notions of $(\epsilon, \delta)$-location privacy, $(\epsilon, \delta, \theta)$-trajectory privacy, and $(\epsilon, \delta, \theta)$-POI privacy, offering a nuanced understanding of privacy leakages across different dimensions of location-based interactions. By establishing fundamental connections between these privacy notions, we provide a holistic approach to privacy preservation in LBS, enabling reasoned analysis and mitigation strategies. Furthermore, our lower bound analysis offers insights into the utility-privacy trade-offs, facilitating the design and evaluation of privacy-preserving mechanisms. Finally, through the instantiation of our framework with the Plannar Isotopic Mechanism, we demonstrate the practical applicability of our approach while ensuring optimal utility and quantifying privacy leakages.
We believe that our contributions substantially advance the comprehension and practical application of privacy-preserving mechanisms in Location-Based Services (LBS). Our work presents a comprehensive framework that effectively balances utility with the complex array of privacy concerns, providing invaluable insights to support informed decision-making across various location-based real-world scenarios.

Nevertheless, it is imperative to acknowledge that the optimization of privacy within the framework of sequential trajectories, encompassing visited POIs, presents an unresolved challenge. Specifically, the determination of suitable parameter values, such as $\epsilon$ and $\delta$, during each interaction with the LBS to attain optimal privacy preservation necessitates additional scrutiny. This issue constitutes a focal point for forthcoming researches.

\begin{acks}
This work were supported by the french national research agency funded
project AUTOPSY (grant no. ANR-20-CYAL-0008).
\end{acks}

\bibliographystyle{ACM-Reference-Format}
\bibliography{sample-base}

\appendix
\section{Appendix}
\subsection{Proof of Proposition \ref{prop:trajectory_privacy}}
\label{proof:prop:trajectory_privacy}
Proof is by induction. In this proof, we use $Pr_{t}^-$ and $Pr_{t}^+$ to denote the prior and posterior (i.e., after the release of the timestep $t$ perturbed location)  probabilities.  First, let us suppose that the proposition holds for $t = 1$ and show that it holds for $t=2$. Since we suppose that the mechanism $\mathcal{M}_1$ satisfies Location-based $\epsilon_1$-LDP, then according to Definition \ref{def:adv-location_obfuscation}, we have:
\begin{equation}
\label{eq:prop_start_resonning_}
    \forall l_i \in \mathcal{L}_1: Pr^+_1[l_i = l^*_1] \leq e^{\epsilon_1} \cdot Pr^-_1[l_i = l^*_1]
\end{equation}
We multiply both sides of the previous
inequation with $Pr^-[l_i \leadsto T]$ to get
\begin{multline}
    \label{eq:proof_thm_traj_eq_mult}
    \forall l_i \in \mathcal{L}_1, \forall T \in \mathcal{T}: Pr^+_1[l_i = l^*_1] \cdot Pr^-_1[l_i \leadsto T] \leq \\ e^{\epsilon_1} \cdot Pr^-_1[l_i = l^*_1] \cdot Pr^-_1[l_i \leadsto T]
\end{multline}
Since the output of $\mathcal{M}$ does not affect $Pr[l_i \leadsto T]$ and considering that $l_i = l^*_1$ and $l_i \leadsto T$ are independent events, then, based on Equation \ref{eq:proof_thm_traj_eq_mult}, we have:
\begin{multline}
    \forall l_i \in \mathcal{L}_1, \forall T \in \mathcal{T}: Pr^+_1[l_i = l^*_1 \cap l_i \leadsto T] \leq \\ e^{\epsilon_1} \cdot Pr^-_1[l_i = l^*_1 \cap l_i \leadsto T]
\end{multline}
We sum each side of the previous inequaion to get:
\begin{multline}
\label{eq:proof_prop_traj_eq_inter}
     \forall T \in \mathcal{T}: \sum\limits_{l_i \in \mathcal{L}_1} Pr^+_1[l_i = l^*_1 \cap l_i \leadsto T] \leq \\ e^{\epsilon_1} \cdot \sum\limits_{l_i \in \mathcal{L}_1} Pr^-_1[l_i = l^*_1 \cap l_i \leadsto T]
\end{multline}
Since the probability distributions of $l_i = l^*_1$ and $l_i \leadsto T$ are defined over the set $\mathcal{L}_1$, then we have:
\begin{equation}
    \sum_{l_i \in \mathcal{L}_1} Pr^{\{+,-\}}_1[l_i = l^* \cap l_i \leadsto T] = Pr^{\{+,-\}}_1[T = T^*]
\end{equation}
Now, we can use the previous equation to rewrite Inequation \ref{eq:proof_prop_traj_eq_inter} as following:
\begin{multline}
\label{eq:prop_t_1_first}
     \forall T \in \mathcal{T}: Pr^+_1[T = T^*] \leq e^{\epsilon_1} \cdot Pr^-_1[T = T^*]
\end{multline}
Following the same previous reasoning (Inequations \ref{eq:prop_start_resonning_} to \ref{eq:prop_t_1_first}) for t = 2, we can also deduce the following:
\begin{multline}
\label{eq:prop_t_1_Second}
     \forall T \in \mathcal{T}: Pr^+_2[T = T^*] \leq e^{\epsilon_2} \cdot Pr^-_2[T = T^*]
\end{multline}
Since no information is disclosed by the user about his/her location between the timestep $t_1$ and $t_2$, then we have $Pr^+_1[T = T^*] = Pr^-_2[T = T^*]$. The later deduction together with Inequations \ref{eq:prop_t_1_first} and \ref{eq:prop_t_1_Second} allow to deduce the following:
\begin{multline}
\label{eq:prop_t_2_final}
     \forall T \in \mathcal{T}: Pr^+_2[T = T^*] \leq e^{\epsilon_1 + \epsilon_2} \cdot Pr^-_1[T = T^*]
\end{multline}
which proves the proposition for t=2.

Now let us suppose that the proposition holds for t-1. Following the same previous reasoning (Inequations \ref{eq:prop_start_resonning_} to \ref{eq:prop_t_1_first}) we can deduce the following:
\begin{multline}
\label{eq:prop_t-1_first}
     \forall T \in \mathcal{T}: Pr^+_{k-1}[T = T^*] \leq e^{\sum_{k=1}^{t-1}\epsilon_{j}} \cdot Pr^-_1[T = T^*]
\end{multline}
Now, since we suppose that $\mathcal{M}_t$ satisfies Location-based $\epsilon_t$-LDP, then using the same reasoning from Inequations \ref{eq:prop_start_resonning_} to \ref{eq:prop_t_1_first}, we deduce:
\begin{multline}
\label{eq:prop_t-1_first}
     \forall T \in \mathcal{T}: Pr^+_{k}[T = T^*] \leq e^{\epsilon_{k}} \cdot Pr^-_k[T = T^*]
\end{multline}
Finally, since $Pr^-_k[T = T^*] = Pr^+_{k-1}[T = T^*]$, we get:
\begin{multline}
\label{eq:prop_t-1_first}
     \forall T \in \mathcal{T}: Pr^+_{k}[T = T^*] \leq e^{\sum_{k=1}^{t}\epsilon_{k}} \cdot Pr^-_1[T = T^*]
\end{multline}
which concludes the proof.

\subsection{Proof of Theorem \ref{thm:location-privacy}}
\label{proof:theorem_location_privacy}
    To prove the theorem, we need to show that:
    \begin{enumerate}
        \item $\forall l_i \in \mathcal{L}_t \backslash \mathcal{L}^*_t: Pr^+[l_i = l^*_t | l^*_t \in \mathcal{L}^*_t ] = 0$
        \item $\forall l_i \in \mathcal{L}^*_t: Pr^+[l_i = l^*_t| l^*_t \in \mathcal{L}^*_t] \leq \delta^{-1} \cdot e^\epsilon \cdot Pr^-[l_i = l^*_t]$
    \end{enumerate}
    Equation 1 is derived from from the fact that the considered (strong) adversary is supposed to know that the true location $l^*_t$ falls within the set $\mathcal{L}^*_t$. 
    As the set of locations in $\mathcal{L}_t \backslash \mathcal{L}^*_t$ have not been considered by the application of $\mathcal{M}$, we have:
    \begin{equation}
    \nonumber
        \forall l_t \in \mathcal{L}_t \backslash \mathcal{L}^*_t: Pr^+[l_t = l^*_t] = Pr^-[l_t = l^*_t] = 0
    \end{equation}
    To prove the correctness of Equation 2, let us start from the fact that the $\epsilon$-LDP mechanism $\mathcal{M}$ is applied on the set of locations $\mathcal{L}^*_t$ and that the adversary knows that the real location of $l^*_t$ is in $\mathcal{L}^*_t$. Then, according to Definition \ref{def:DP_los}, for all $l_i \in \mathcal{L}^*_t$ we have:
    \begin{equation}
        \label{eq:dp_th1_app}
         Pr^+[l_i = l^*_t | l^*_t \in \mathcal{L}^*_t] \leq e^\epsilon \cdot Pr^-[l_i = l^*_t | l^*_t \in \mathcal{L}^*_t] 
    \end{equation}
    Let us focus on the left side of Equation \ref{eq:dp_th1_app}. We have:
    \begin{equation}
        Pr^-[l_i = l^*_t | l^*_t \in \mathcal{L}^*_t] = \frac{Pr^-[l_i = l^*_t \cap l^*_t \in \mathcal{L}^*_t]}{Pr[l^*_t \in \mathcal{L}^*_t]}
    \end{equation}
    Since we are dealing with $l_i \in \mathcal{L}^*_t$, then we have:
    \begin{equation}
        \label{eq:th1_lasts}
        Pr^-[l_i = l^*_t | l^*_t \in \mathcal{L}^*_t] = \frac{Pr^-[l_i = l^*_t]}{Pr[l^*_t \in \mathcal{L}^*_t]} = \frac{Pr^-[l_i = l^*_t]}{\delta}
    \end{equation}
    Then, based on Equations \ref{eq:dp_th1_app} and \ref{eq:th1_lasts}, we deduce that $\forall l_i \in \mathcal{L}^*_t$ the following holds:
    \begin{equation}
         Pr^+[l_i = l^*_t| l^*_t \in \mathcal{L}^*_t] \leq  \delta^{-1} \cdot e^\epsilon \cdot Pr^-[l_i = l^*_t]
    \end{equation}
    which concludes the proof.

\subsection{Proof of Theorem \ref{thm:link-LP-TP}}
\label{proof:link-LP-TP}
Since we suppose that the mechanism $\mathcal{M}$ satisfies ($\epsilon, \delta$)-location privacy, then according to Theorem \ref{thm:location-privacy}, we have:
\begin{equation}
    \forall l_i \in \mathcal{L}^*_t: Pr^+[l_i = l^*_t | l^*_t \in \mathcal{L}^*_t] \leq \delta^{-1} \cdot e^\epsilon \cdot Pr^-[l_i = l^*_t]
\end{equation}
We multiply both sides of the previous
inequation with $Pr^-[l_i \leadsto T]$ to get
\begin{multline}
    \label{eq:proof_thm_traj_eq_mult}
    \forall l_i \in \mathcal{L}^*_t, \forall T \in \mathcal{T}: Pr^+[l_i = l^*_t | l^*_t \in \mathcal{L}^*_t] \cdot Pr^-[l_i \leadsto T] \leq \\ \delta^{-1} \cdot e^\epsilon \cdot Pr^-[l_i = l^*_t] \cdot Pr^-[l_i \leadsto T]
\end{multline}
Since the output of $\mathcal{M}$ does not affect $Pr[l_i \leadsto T]$ and considering that $l_t = l^*_t$ and $l_i \leadsto T$ are independent events, then from Equation \ref{eq:proof_thm_traj_eq_mult}, we get:
\begin{multline}
    \forall l_i \in \mathcal{L}^*_t, \forall T \in \mathcal{T}: Pr^+[l_i = l^*_t \cap l_i \leadsto T | l^*_t \in \mathcal{L}^*_t] \leq \\ \delta^{-1} \cdot e^\epsilon \cdot Pr^-[l_i = l^*_t \cap l_i \leadsto T]
\end{multline}
We sum each side of the previous inequation for each $ l_i \in \mathcal{L}^*_t$ to get:
\begin{multline}
\label{eq:proof_thm_traj_eq_inter}
     \forall T \in \mathcal{T}: \sum\limits_{l_i \in \mathcal{L}^*_t} Pr^+[l_i = l^*_t \cap l_i \leadsto T | l^*_t \in \mathcal{L}^*_t] \leq \\ \delta^{-1} \cdot e^\epsilon \cdot \sum\limits_{l_i \in \mathcal{L}^*_t} Pr^-[l_i = l^*_t \cap l_i \leadsto T]
\end{multline}
Now, considering that $\forall \l_i \in \mathcal{L}_t\backslash \mathcal{L}^*_t: Pr^+[l_i=l^* \cap l_i \leadsto T | l^*_t \in \mathcal{L}^*_t] = 0$, then we can add $\sum_{\l_i \in \mathcal{L}_t\backslash \mathcal{L}^*_t} Pr^+[l_i=l^* \cap l_i \leadsto T | l^*_t \in \mathcal{L}^*_t]$ to the right side of Inequation \ref{eq:proof_thm_traj_eq_inter} to get:
\begin{multline}
     \forall T \in \mathcal{T}: \sum\limits_{l_i \in \mathcal{L}_t} Pr^+[l_i = l^*_t \cap l_i \leadsto T | l^*_t \in \mathcal{L}^*_t]  \leq \\ \delta^{-1} \cdot e^\epsilon \cdot \sum\limits_{l_i \in \mathcal{L}^*_t} Pr^-[l_i = l^*_t \cap l_i \leadsto T] 
\end{multline}
Then, we add $\sum_{l_i \in \mathcal{L}_t\backslash \mathcal{L}^*_t} Pr^-[l_i=l^* \cap l_i \leadsto T]$ to both sides of the previous inequation to get:
\begin{multline}
\label{eq:proof_thm_traj_eq_L_to_T}
     \forall T \in \mathcal{T}: \sum\limits_{l_i \in \mathcal{L}_t} Pr^+[l_i = l^*_t \cap l_i \leadsto T | l^*_t \in \mathcal{L}^*_t] + \\ \sum_{l_i \in \mathcal{L}_t\backslash \mathcal{L}^*_t} Pr^-[l_i=l^* \cap l_i \leadsto T] \leq \\ \delta^{-1} \cdot e^\epsilon \cdot \sum\limits_{l_i \in \mathcal{L}_t} Pr^-[l_i = l^*_t \cap l_i \leadsto T] 
\end{multline}
Since the probability distributions of $l_i = l*_t$ and $l_i \leadsto T$ are defined over the set $\mathcal{L}_t$, then we have:
\begin{equation}
    \sum_{l_i \in \mathcal{L}_t} Pr[l_i = l^* \cap l_i \leadsto T] = Pr[T = T^*]
\end{equation}
Now, we can use the previous equation to rewrite Inequation \ref{eq:proof_thm_traj_eq_L_to_T} as following:
\begin{multline}
\label{eq:proof_thm_traj_eq_T}
     \forall T \in \mathcal{T}: Pr^+[T = T^* | l^*_t \in \mathcal{L}^*_t] \leq \delta^{-1} \cdot e^\epsilon \cdot Pr^-[T = T^*] \\
     - \sum_{l_i \in \mathcal{L}_t\backslash \mathcal{L}^*_t} Pr^-[l_i=l^* \cap l_i \leadsto T]
\end{multline}
Considering the fact that $l_i=l^*$ and $l_i \leadsto T$ are independent events then we have:
\begin{multline}
\label{eq:proof_thm_traj_eq_last}
    -\sum_{l_i \in \mathcal{L}_t\backslash \mathcal{L}^*_t} Pr^-[l_i=l^* \cap l_i \leadsto T] \leq - \min_{l_i \in \mathcal{L}_t\backslash \mathcal{L}^*_t} Pr^-[l_i \leadsto T] ~ \cdot \\ \sum_{l_i \in \mathcal{L}_t\backslash \mathcal{L}^*_t} Pr^-[l_i=l^*] \\
    \leq \min_{l_i \in \mathcal{L}_t\backslash \mathcal{L}^*_t} Pr^-[l_i \leadsto T] ~ \cdot (\delta-1)
\end{multline}
Finally, based on Inequations \ref{eq:proof_thm_traj_eq_T} and \ref{eq:proof_thm_traj_eq_last}
we deduce:
\begin{equation}
    \forall T \in \mathcal{T}:  Pr^+[T = T^* | l^*_t \in \mathcal{L}^*_t]\leq \delta^{-1} \cdot e^\epsilon \cdot Pr^-[T = T^*] + \theta \nonumber
\end{equation}
with $\theta = \min_{l_i \in \mathcal{L}_t\backslash \mathcal{L}^*_t} Pr^-[l_i \leadsto T] ~ \cdot (\delta-1)$ which concludes the proof.

\subsection{Proof of Theorem \ref{thm:composition}}
\label{proof:composition}
    Proof is by induction. We first check that Theorem \ref{thm:composition} holds for $t=2$. In the rest of the proof, we use $Pr_{t}^-$ and $Pr_{t}^+$ to denote the prior and posterior (i.e., after the release of the timestep $t$ perturbed location)  probabilities. Since $\mathcal{M}_1$ satisfies ($\epsilon_1, \delta_1, \theta_1$)-trajectory privacy, then according to Definition \ref{def:trajecroty_privacy}, $\forall T \in \mathcal{T}$ we have:
    \begin{equation}
    \label{eq:thm_composition_t1}
        Pr^+_1[T = T^* | l^* \in \mathcal{L}^*_1] \leq \delta_1^{-1} \cdot e^{\epsilon_1} \cdot Pr^-_1[T = T^*] + \theta_1
    \end{equation}
    Now, considering that $\mathcal{M}_2$ satisfies ($\epsilon_2, \delta_2, \theta_2$)-trajectory privacy, then according to Definition \ref{def:trajecroty_privacy}, $\forall T \in \mathcal{T}$ we have:
    \begin{equation}
    \label{eq:thm_composition_t2}
        Pr^+_2[T = T^* | l^* \in \mathcal{L}^*_1] \leq \delta_2^{-1} \cdot e^{\epsilon_2} \cdot Pr^-_2[T = T^*] + \theta_2
    \end{equation}
    Since, no information are disclosed about the location of the user between the release of the perturbed locations $z_1$ and $z_2$, then $\forall T \in \mathcal{T}$ we have 
    \begin{equation}
    \label{eq:thm_composition_time_trans}
        Pr^+_1[T = T^* | l^* \in \mathcal{L}^*_1] = Pr^-_2[T = T^*]
    \end{equation}
    Hence, based on Equations \ref{eq:thm_composition_t1}, \ref{eq:thm_composition_t2}, and \ref{eq:thm_composition_time_trans}, we get:
    \begin{multline}
        Pr^+_2[T = T^* | l^* \in \mathcal{L}^*_2] \leq \delta_1^{-1} \cdot \delta_2^{-1} \cdot e^{\epsilon_1 + \epsilon_2} \cdot Pr^-_1[T = T^*] \\
        + \theta_2 + \theta_1 \cdot \delta_2^{-1} \cdot e^{\epsilon_2}
    \end{multline}
    which proves the theorem for $t=2$.
    
    Hence, to conclude the proof, we need to show that the theorem holds for $k+1$ while supposing that it holds for $k$. Considering the latter hypothesis, then $\forall T \in \mathcal{T}$ we have:
    \begin{multline}
        Pr^+_k[T = T^* | l^* \in \mathcal{L}^*_k] \leq \left(\prod_{i=1}^k \delta_i^{-1} \right) \cdot e^{\sum_{i=1}^k \epsilon_i} \cdot  Pr^-_1[T = T^*]\\
        + \theta_k + \sum_{i=1}^{k-1} \theta_i \prod_{j=i+1}^{k} \cdot \delta_{j}^{-1} \cdot e^{\epsilon_{j}}
    \end{multline}
    After releasing the the $k+1$ location using $\mathcal{M}_{k+1}$, $\forall T \in \mathcal{T}$ we have:
    \begin{multline}
        Pr^+_{k+1}[T = T^* | l^* \in \mathcal{L}^*_{k+1}] \leq \delta_{k+1} \cdot e^{\epsilon_{k+1}} \cdot \\ Pr^-_{k+1}[T = T^*]  + \theta_{k+1}
    \end{multline}
    since $Pr^+_k[T = T^* | l^* \in \mathcal{L}^*_k] = Pr^-_{k+1}[T = T^*]$, then we get:
    \begin{multline}
        Pr^+_{k+1}[T = T^* | l^* \in \mathcal{L}^*_{k+1}] \leq \left(\prod_{i=1}^{k+1} \delta_i^{-1}\right) \cdot e^{\sum_{i=1}^{k+1} \epsilon_i} \cdot Pr^-_1[T = T^*] \\
        + \theta_{k+1} + \sum_{i=1}^{k} \theta_i \prod_{j=i+1}^{k+1} \delta_{j}^{-1} \cdot e^{\epsilon_{j}}
    \end{multline}
    which concludes the proof.

\subsection{Proof of Theorem \ref{thm:link-LP-POIP}}
\label{proof:link-LP-POIP}
    The proof of the previous theorem follows the same reasoning as the proof of Theorem \ref{thm:link-LP-TP}.
    Since we suppose that the mechanism $\mathcal{M}$ satisfies ($\epsilon, \delta$)-location privacy, then according to Theorem \ref{thm:location-privacy}, we have:
\begin{equation}
    \forall l_i \in \mathcal{L}^*_t: Pr^+[l_i = l^*_t | l^*_t \in \mathcal{L}^*_t] \leq \delta^{-1} \cdot e^\epsilon \cdot Pr^-[l_i = l^*_t]
\end{equation}
We multiply both sides of the inequation with $Pr^-[l_i \leadsto P]$ to get
\begin{multline}
    \label{eq:proof_thm_traj_eq_mult_PIO}
    \forall l_i \in \mathcal{L}^*_t, \forall P \in \mathcal{P}: Pr^+[l_i = l^*_t | l^*_t \in \mathcal{L}^*_t] \cdot Pr^-[l_i \leadsto P] \leq \\ \delta^{-1} \cdot e^\epsilon \cdot Pr^-[l_i = l^*_t] \cdot Pr^-[l_i \leadsto P]
\end{multline}
Since the output of $\mathcal{M}$ does not affect $Pr[l_i \leadsto P]$, $l_t = l^*_t$ and $l_i \leadsto P$ are independant events. Then, based on Equation \ref{eq:proof_thm_traj_eq_mult_PIO}, we have:
\begin{multline}
    \forall l_i \in \mathcal{L}^*_t, \forall P \in \mathcal{P}: Pr^+[l_i = l^*_t \cap l_i \leadsto P | l^*_t \in \mathcal{L}^*_t] \leq \\ \delta^{-1} \cdot e^\epsilon \cdot Pr^-[l_i = l^*_t \cap l_i \leadsto P]
\end{multline}
We sum each side of the previous inequaion for each $ l_i \in \mathcal{L}^*_t$ to get:
\begin{multline}
\label{eq:proof_thm_traj_eq_inter_PIO}
     \forall P \in \mathcal{P}: \sum\limits_{l_i \in \mathcal{L}^*_t} Pr^+[l_i = l^*_t \cap l_i \leadsto P | l^*_t \in \mathcal{L}^*_t]  \leq \\ \delta^{-1} \cdot e^\epsilon \cdot \sum\limits_{l_i \in \mathcal{L}^*_t} Pr^-[l_i = l^*_t \cap l_i \leadsto P]
\end{multline}
Now, considering that $\forall \l_i \in \mathcal{L}_t\backslash \mathcal{L}^*_t: Pr^+[l_i=l^* \cap l_i \leadsto P | l^*_t \in \mathcal{L}^*_t] = 0$, then we can add $\sum_{\l_i \in \mathcal{L}_t\backslash \mathcal{L}^*_t} Pr^+[l_i=l^* \cap l_i \leadsto P | l^*_t \in \mathcal{L}^*_t]$ to the right side of Inequation \ref{eq:proof_thm_traj_eq_inter_PIO} to get:
\begin{multline}
     \forall P \in \mathcal{P}: \sum\limits_{l_i \in \mathcal{L}_t} Pr^+[l_i = l^*_t \cap l_i \leadsto P | l^*_t \in \mathcal{L}^*_t] \leq \\ \delta^{-1} \cdot e^\epsilon \cdot \sum\limits_{l_i \in \mathcal{L}^*_t} Pr^-[l_i = l^*_t \cap l_i \leadsto P]
\end{multline}
Then, we add $\sum_{l_i \in \mathcal{L}_t\backslash \mathcal{L}^*_t} Pr^-[l_i=l^* \cap l_i \leadsto P]$ to both sides of the previous inequation to get:
\begin{multline}
\label{eq:proof_thm_traj_eq_L_to_T_PIO}
     \forall P \in \mathcal{P}: 
     \sum\limits_{l_i \in \mathcal{L}_t} Pr^+[l_i = l^*_t \cap l_i \leadsto P | l^*_t \in \mathcal{L}^*_t] + \\ \sum_{l_i \in \mathcal{L}_t\backslash \mathcal{L}^*_t} Pr^-[l_i=l^* \cap l_i \leadsto P] \leq \\
     \delta^{-1} \cdot e^\epsilon \cdot \sum\limits_{l_i \in \mathcal{L}_t} Pr^-[l_i = l^*_t \cap l_i \leadsto P]
\end{multline}
Since the probability distibutions of $l_i = l*_t$ and $l_i \leadsto T$ are defined over the set $\mathcal{L}_t$, then we have:
\begin{equation}
    \sum_{l_i \in \mathcal{L}_t} Pr[l_i = l^* \cap l_i \leadsto P] = Pr[P = P^*]
\end{equation}
Now, we can use the previous equation to rewrite Inequation \ref{eq:proof_thm_traj_eq_L_to_T_PIO} as following:
\begin{multline}
\label{eq:proof_thm_traj_eq_T_PIO}
     \forall P \in \mathcal{P}: Pr^+[P = P^* | l^*_t \in \mathcal{L}^*_t] \leq \delta^{-1} \cdot e^\epsilon \cdot Pr^-[P = P^*] - \\ \sum_{l_i \in \mathcal{L}_t\backslash \mathcal{L}^*_t} Pr^-[l_i=l^* \cap l_i \leadsto P]
\end{multline}
Considering the fact that $l_i=l^*$ and $l_i \leadsto T$ are independent events, we get:
\begin{multline}
\label{eq:proof_thm_traj_eq_last_PIO}
    - \sum_{l_i \in \mathcal{L}_t\backslash \mathcal{L}^*_t} Pr^-[l_i=l^* \cap l_i \leadsto P] \leq - \min_{l_i \in \mathcal{L}_t\backslash \mathcal{L}^*_t} Pr^-[l_i \leadsto P] ~ \cdot \\ \sum_{l_i \in \mathcal{L}_t\backslash \mathcal{L}^*_t} Pr^-[l_i=l^*] \\
    \leq \min_{l_i \in \mathcal{L}_t\backslash \mathcal{L}^*_t} Pr^-[l_i \leadsto P] ~ \cdot (\delta-1)
\end{multline}
Finally, based on Inequations \ref{eq:proof_thm_traj_eq_T_PIO} and \ref{eq:proof_thm_traj_eq_last_PIO}
we deduce:
\begin{equation}
    \forall P \in \mathcal{P}: Pr^-[P = P^*] \leq \delta^{-1} \cdot e^\epsilon \cdot Pr^+[P = P^* | l^*_t \in \mathcal{L}^*_t] + \theta 
\end{equation}
which concludes the proof.
$\\~\\~\\~\\~\\~\\~\\~\\~\\~\\~\\~\\~\\~\\~\\~\\~\\~\\~\\~\\~$

\subsection{POI identification Algorithm}
\label{app:ident_poi}
\begin{algorithm}[H]
\caption{POI Extraction Algorithm}
\label{alg:poi_extraction}
\begin{algorithmic}[1]
\REQUIRE $\text{minTime} > 0$, $\text{maxDist} > 0$, $\text{minPts} \geq 1$, $|\text{data\_points}| > 0$
\STATE $\text{stay\_set} \leftarrow \emptyset$, $i \leftarrow 0$
\STATE $\text{candidate\_set} \leftarrow \emptyset$ \COMMENT{Ordered list of points}
\WHILE{$i < |\text{data\_points}|$}
    \STATE $\text{diameter} \leftarrow \max_{p \in \text{candidate\_set}} \text{distance}(\text{data\_points}[i], p)$
    \IF{$\text{diameter} \leq \text{maxDist}$}
        \STATE add $\text{data\_points}[i]$ to $\text{candidate\_set}$
        \STATE $i \leftarrow i + 1$
    \ELSE
        \IF{elapsed\_time in $\text{candidate\_set} \geq \text{minTime}$}
            \STATE add centroid of $\text{candidate\_set}$ to $\text{stay\_set}$
            \STATE $\text{candidate\_set} \leftarrow \emptyset$
        \ELSE
            \STATE remove first element of $\text{candidate\_set}$
        \ENDIF
    \ENDIF
\ENDWHILE
\IF{elapsed\_time in $\text{candidate\_set} \geq \text{minTime}$}
    \STATE add centroid of $\text{candidate\_set}$ to $\text{stay\_set}$
\ENDIF
\STATE $\text{clusters} \leftarrow \emptyset$
\FOR{$\text{stay}$ \textbf{in} $\text{stay\_set}$}
    \STATE $\text{neighborhood} \leftarrow \{s \in \text{stay\_set} \,|\, \text{dist}(s, \text{stay}) \leq \text{maxDist} \times 0.75\}$
    \IF{$|\text{neighborhood}| \geq \text{minPts}$}
        \FOR{$\text{cluster}$ \textbf{in} $\text{clusters}$}
            \IF{$\text{neighborhood} \cap \text{cluster} \neq \emptyset$}
                \STATE $\text{neighborhood} \leftarrow \text{neighborhood} \cup \text{cluster}$
                \STATE remove $\text{cluster}$ from $\text{clusters}$
            \ENDIF
        \ENDFOR
        \STATE add $\text{neighborhood}$ to $\text{clusters}$
    \ENDIF
\ENDFOR
\RETURN centroids of $\text{clusters}$
\end{algorithmic}
\end{algorithm}

\end{document}